\newcommand{\Sp}{\mathrm{Sp}}
\newcommand{\diag}{\mathrm{diag}}
\newcommand{\reals}{\mathds{R}}
\newcommand{\complex}{\mathds{C}}
\theoremstyle{plain}
\newtheorem{thm}{Theorem}
\theoremstyle{definition}
\newtheorem{defn}{Definition}
\newtheorem{lema}{Lemma}
\newtheorem{propos}{Proposition}
\newcommand{\ie}{\textit{i.e.,\ }}
\newcommand{\<}{\langle}
\renewcommand{\>}{\rangle}
\renewcommand{\phi}{\varphi}
\begin{document}

\preprint{APS/123-QED}

\title{Geometric characterization of non-Gaussian entanglement for finite stellar rank states} 

\author{Carlos E. Lopetegui-Gonz\'alez}
\email{carlos-ernesto.lopetegui-gonzalez@lkb.upmc.fr}
\author{Massimo Frigerio}
\email{massimo.frigerio@lkb.upmc.fr}
\author{Mattia Walschaers}
\email{mattia.walschaers@lkb.upmc.fr}
\affiliation{Laboratoire Kastler Brossel, Sorbonne Universit\'{e}, CNRS, ENS-Universit\'{e} PSL,  Coll\`{e}ge de France, 4 place Jussieu, F-75252 Paris, France}

\date{\today} 

\begin{abstract}

We introduce a general framework for the analysis of non-Gaussian entanglement in bosonic states of finite stellar rank. The central result is the full characterization of their entanglement structure through the atomic decomposition of their stellar polynomial and its associated structural graph, whose connected components determine the mode-intrinsic entanglement content of the state and all partitions compatible with passive separability. An essential ingredient in this construction is the concept of essential variables, which identify the minimal number of effective modes involved in a core state, in direct correspondence with the symplectic rank. This reduction provides the foundation for decomposing stellar polynomials into atomic factors and for revealing the underlying entanglement structure. Building on this, we derive complete separability criteria for two-mode states, expressed through hyperplane decompositions of zero sets, and for stellar-rank-2 states across arbitrary number of modes. Applications to several example states illustrate how the method isolates genuinely non-Gaussian resources and quantifies preparation complexity. 
\end{abstract}
\maketitle
\section{Introduction}
Continuous variable quantum optics \cite{braunstein_quantum_2005,Adesso_2014} provides one of the most promising pathways for scalability of quantum information and quantum computing. This is confirmed by the potential to create large entangled resource states in a deterministic way \cite{Larsen_2019,Asavanat_2019}. Yet, entanglement is not the only non-classical resource required for performing relevant information processing or computational tasks. Notably, it is well known that non-Gaussianity \cite{bartlett_efficient_2002}, even in the strong form of Wigner negativity \cite{mari_positive_2012}, is required to achieve quantum computational advantages with bosonic- continuous variable systems. It is thus from the intertwining between quantum entanglement \cite{horodecki_quantum_2009} and non-Gaussianity \cite{walschaers_non-gaussian_2021}, that the full potential of quantum information processing with bosonic systems stems. This makes the characterization of entanglement in non-Gaussian states a timely task, which turns out to be remarkably challenging. \par
The first problem to face when studying non-Gaussian entanglement as a resource is how to properly define it. So far, one common approach has been to refer to any entanglement that cannot be observed by relying on Gaussian criteria \cite{Simon2000,Duan2000} as \emph{non-Gaussian entanglement} \cite{barral_metrological_2024}. A different approach is to consider non-Gaussian entanglement to be the entanglement that persists in any choice of mode basis (i.e., under all passive linear optics transformations). The reason is that such entanglement can only arise in non-Gaussian states, as will be further discussed later on. Recently, in \cite{chabaud_resources_2023} the relevance of this kind of entanglement for a certain class of sampling protocols was pointed out. Previous works \cite{walschaers_statistical_2017,sperling_mode-independent_2019,chabaud_holomorphic_2022,lopetegui_detection_2024} have also discussed the potential relevance of this kind of entanglement for quantum optics information processing. Different choices have been made for how to call this form of entanglement, including \textit{mode-independent entanglement, mode-intrinsic entanglement}, or just non-Gaussian entanglement. In what follows we will stick to the choice of calling it mode-intrinsic entanglement \cite{lopetegui_detection_2024}. Complementarily, we call passive separable all states that are not mode-intrinsic entangled.
\par

A further classification on the family of possible continuous variable entangled states is provided by all states whose entanglement cannot be undone by Gaussian operations, \ie entanglement that cannot be undone by complementing passive linear optics with active operations. This set of states has recently been studied in \cite{zhao_genuine_2024} under the name of \emph{genuine non-Gaussian entanglement}. Passive operations being a subset of general Gaussian operations, the set of passive-separable states is thus a subset of Gaussian-separable states, by which we refer to all bipartite states whose entanglement can be generated from a separable state by applying a general Gaussian unitary operation. This implies that all genuine non-Gaussian entangled states are also mode-intrinsic entangled states. Gaussian separability, and thus passive separability reduce the sample and time complexity of learning a classical description of a quantum state \cite{zhao_genuine_2024}.

\par
In this paper we focus our attention on the analysis of pure states. A discussion of the difficulties of addressing the more general case of mixed states is presented in the outlook section. 
 
 \par
 The rest of the paper is structured as follows. In section \ref{sec:passive-Gaussian_sep} we formally define genuine non-Gaussian entanglement and mode-intrinsic entanglement. Following this, we briefly introduce the stellar representation of bosonic states \cite{chabaud_holomorphic_2022}, which can be used to draw conclusions about passive and Gaussian separability, based on properties of the core state of the given non-Gaussian state. In section \ref{sec:passive-sep_core-state}, we outline the connection between passive separability of core states and the existence of a factorization of their stellar polynomials on independent sets of variables. In section \ref{sec:structural_charact_stellar_poly} we develop a technique to probe this question, by introducing the atomic decomposition of stellar polynomials.  This decomposition 
encodes all the information about the intrinsic mode entanglement properties of the state. The procedure to construct the atomic decomposition is based on two main subroutines: factorization of polynomials into irreducible factors, and a dimensional reduction of a polynomial to its \emph{essential variables}.  A discussion of the relevance of both subroutines, for our analysis, and beyond, is also presented in section \ref{sec:structural_charact_stellar_poly}.
Finally, in section \ref{sec:special_cases}, we specialize our techniques to simpler but experimentally relevant cases, first tackling passive-separability in the two-mode setting for generic stellar rank and then restricting the stellar rank to $r=2$ but leaving the total number of modes unconstrained.
These special cases have been considered before, both in the context of mode independent entanglement detection \cite{sperling_mode-independent_2019}, and of linear optical conversion of multiphoton states \cite{Migdal_2014}, mainly focusing on the case of homogeneous polynomials, \ie states with a conserved photon number.

 \section{Passive and Gaussian separability}\label{sec:passive-Gaussian_sep}
 We begin this section by defining the set of passive separable states \cite{sperling_mode-independent_2019,walschaers_statistical_2017,chabaud_holomorphic_2022,lopetegui_detection_2024}. To give definitions in the full generality that are compatible with the results that we are going to present, we introduce them referring to generic $K$-partitions, although oftentimes the relevant questions concerning passive separability are just posed in terms of nontrivial \emph{bipartitions}. A $K$-partition $I_{K} = \{n_1,...,n_{K}\}$ of $M$ orthogonal modes is a partition of the integer $M$ such that each integer $n_{j}$ in $I_{K}$ determines the number of orthogonal modes in each group of the partition: for example, the notation $I_{3} = \{ 5,3,1 \}$ describes a partition of $9$ orthogonal modes into a group of $5$ modes, a group of $3$ modes and one residual mode. Since the ordering of the modes is irrelevant when we are free to change the mode basis in any way, only the specification of the number of modes per partition is relevant. \\
\begin{figure}[htbp]
\centering
\includegraphics[width =0.7\linewidth]{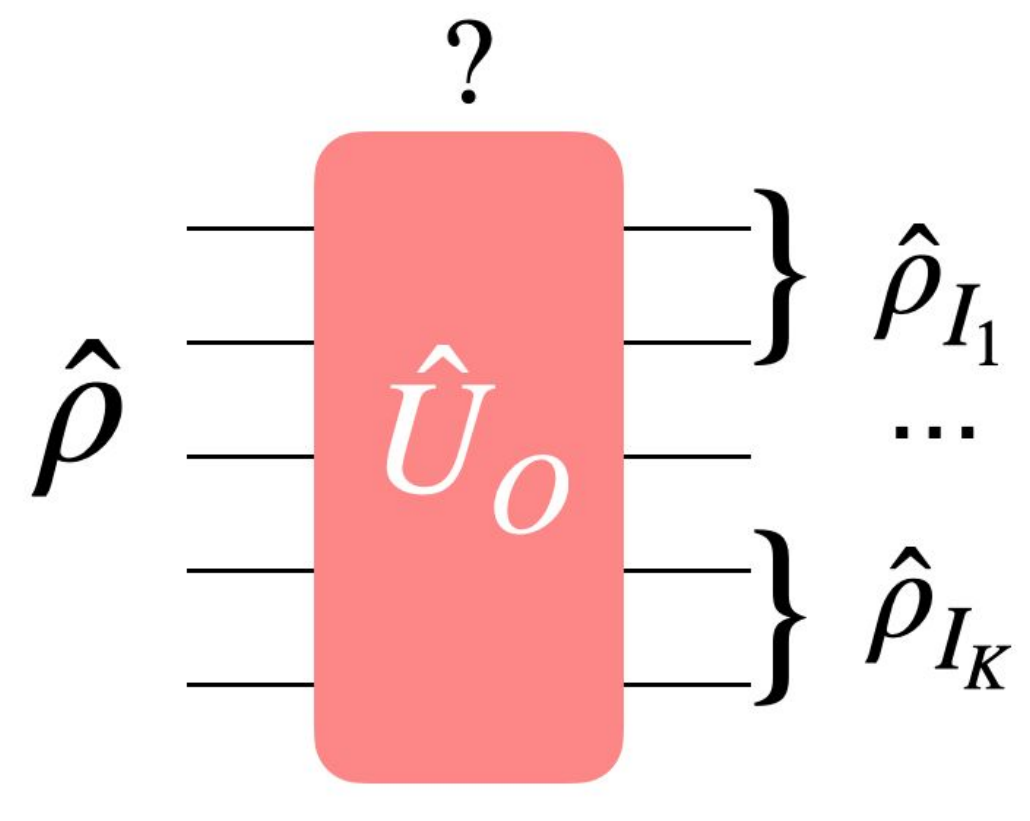}
\caption{A state $|\psi\rangle$ is passive separable with respect to some integer partition of the modes $\mathcal I_K$, if there exists a passive unitary $\hat U_O$, that disentangles it with respect to such a partition. An equivalent representation implies that the state can be generated from such a separable state by the implementation of the inverse of $\hat U_O$.}
\label{fig:passive_separability}
\end{figure}
 \begin{defn}\label{def:passive_sep}
     \textbf{Passive separable states} A bosonic quantum state $|\psi\rangle$ over $M$ modes is said to be passive separable with respect to some $K$-partition of the $M$ modes $I_{K} = \{n_1,...,n_K\}$ if there exists a linear optics passive operation ${\bf O} \in \mathcal K(M)= \mathrm{Sp}(2M,\mathbb R) \cap \mathrm{O}(2 M)$, that disentangles the state with respect to this $K$-partition, \ie 
     \begin{equation*}
         \hat{U}_{\mathbf{O}} |\psi\rangle =\bigotimes_{i=1}^{K} |\psi_{I_i}\rangle.
     \end{equation*}
 \end{defn}
 Here $\mathcal K(M)$ is the orthogonal compact subgroup of the real symplectic group $\mathrm{Sp} (2M,\mathbb R)$, given by its intersection with the orthogonal group $\mathrm{O}(2M)$, while $\hat{U}_\mathbf{O}$ is the unitary operator associated with the linear action of $\mathbf{O}$ on the phase-space.
 We will refer to the set of passive separable states with respect to some $K$-partition $I_K$ of $M$ modes as $\mathcal{P}_S(I_K,M)$. The set of mode-intrinsic entangled states is given by its complementary set. Formally we define it as \\
 \begin{defn}\label{def:mode-intrinsic_entanglement}
 \textbf{Mode-intrinsic entanglement}
 A state $|\psi\rangle$ over $M$ mode is mode-intrinsic entangled with respect to some $K$-partition $I_K$ if $|\psi\rangle\notin \mathcal{P}_S(I_K,M)$, \ie if there is no passive linear optics operation that disentangles it with respect to the $K$-partition of interest.  
 \end{defn}
 \par 
 Notice the definition of mode-intrinsic entanglement is invariant under permutation of the modes, and thus under any relabeling of the $K$-partition $I_K$ that does not alter the number of modes per element of the partition. \par 
 \begin{figure}[htbp]
\centering
\includegraphics[width =0.7\linewidth]{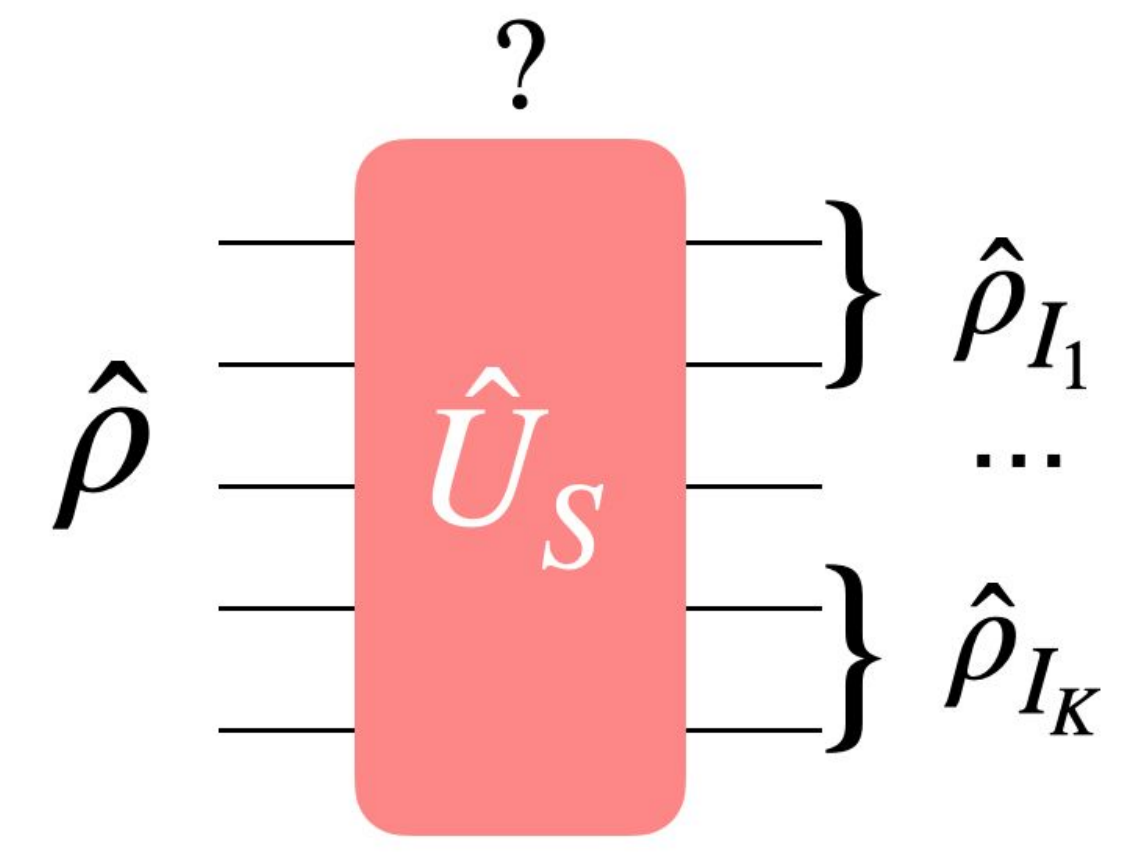}
\caption{A state $|\psi\rangle$ is Gaussian separable with respect to some integer partition of the modes $\mathcal I_K$, if there exists a Gaussian unitary $\hat U_S$, that disentangles it with respect to such a partition. }
\label{fig:Gaussian_separability}
\end{figure}
 \begin{defn}\label{def:Gaussian-separability}
     \textbf{Gaussian separable states} A bosonic quantum state $|\psi\rangle$ over $M$ modes is Gaussian separable with respect to some $K$-partition $I_K=\{I_1,...,I_K\}$, if there exists a Gaussian unitary transformation $\hat{U}_{\mathbf{S}}$, associated with a symplectic linear transformation $\mathbf{S} \in \Sp (2M,\mathbb R)$ acting on phase space, that disentangles the state with respect to $I_K$, \ie 
     \begin{equation*}
         \hat{U}_{\mathbf{S}} |\psi\rangle=\bigotimes_{i=1}^{K}|\psi_{I_i}\rangle.
     \end{equation*}
 \end{defn}
 Complementary to the set of Gaussian separable states, which we will refer to as $\mathcal{G}_{S}(I_K,M)$, we define the set of genuinely non-Gaussian entangled states \cite{zhao_genuine_2024} as\\ 
 \begin{defn}\label{def:genuine_non-Gaussian_ent}
     \textbf{Genuine non-Gaussian entanglement} A bosonic quantum state $|\psi\rangle$ over $M$ modes is said to be genuinely non-Gaussian entangled with respect to some $K$-partition $I_K$ if $|\psi\rangle\notin \mathcal{G}_{S}(I_K,M)$, \ie if there is no symplectic transformation that disentangles it with respect to the partition $I_K$.
 \end{defn}\par
 The same invariance under permutations that applied to passive separability applies to the case of Gaussian separability. Notice that, because the set of passive linear optics operations, $\mathcal K(n)$ is a subset of the set of Gaussian (symplectic) operations $\mathcal K(n)\subset \Sp(2n,\mathbb R)$, the set of passive separable states is a subset of the Gaussian separable state, $\mathcal{P}_S(I_K,M)\subset \mathcal{G}_{S}(I_K,M)$. Complementary, the set of mode-intrinsic entangled states ($\overline{\mathcal{P}}_S$), includes the set of genuinely non-Gaussian entangled states ($\overline{\mathcal{G}}_{S}$), \ie $\overline{\mathcal{G}}_{S}(I_K,M)\subset \overline{\mathcal{P}}_{S}(I_K,M) $. \par

 \subsection{The stellar rank formalism}
 In the following we briefly introduce the stellar formalism \cite{chabaud_holomorphic_2022}, a very convenient framework for the analysis of the different types of entanglement we have introduced above. The stellar function representation of a quantum state is defined by \cite{chabaud_holomorphic_2022}: 
\begin{equation}\label{eq:stellar_function}
     F^{\star}_{\psi}(\mathbf z)=\exp(\frac{1}{2}\left|\mathbf z\right|^2)\<\mathbf z^{*}|\psi\>,
\end{equation}
where $\mathbf z\in \complex^{2M}$, and $|\mathbf z^{*}\rangle$ is the coherent state of amplitude $\mathbf z^{*}$. The stellar function $F^{\star}_{\psi}$ is thus given by the overlap between a coherent state and the quantum state $|\psi\rangle$, and can be thought of as a \textit{wavefunction} in phase space, given its relation to the Husimi Q-function:
\begin{equation*}
    Q_{\psi}(\mathbf{z})=\frac{e^{-|\mathbf{z}|^2}}{\pi^M} \mathcal \vert F_{\psi}^{\star}(\mathbf{z})\vert^2. 
\end{equation*}
When $M=1$, the number of zeros of $F^{\star}_{\psi}$ is called the \emph{stellar rank} of $\vert \psi \rangle$ and it is $0$ if and only if the state is Gaussian, while for non-Gaussian states it can even be infinite. For $M \geq 2$, an analogous definition of stellar rank holds, by counting the total degree of the stellar function.

 \begin{thm}\label{Stellar representation}
 \textbf{Stellar representation of bosonic quantum states}
 Let $\vert \psi \rangle$ be a generic pure state of any bosonic quantum system over $M$ modes with finite stellar rank $r$. Then it can be represented as $|\psi\rangle=\hat{U}_{\mathbf{S}_\psi} |C_{\psi}\rangle$, where $\mathbf{S}_{\psi}\in \mathrm{Sp}(2M,\mathbb R)$ is the symplectic linear transformation associated with the Gaussian unitary $\hat{U}_{\mathbf{S}_\psi}$, and $|C_{\psi}\rangle$ is a core state, i.e. a finite linear combination of multimode Fock states, $|C_{\psi}\rangle=\sum_{\mathbf n|\sum_i n_i\leq r}c_{\mathbf n}|n_1,...,n_M\rangle$. 
 \end{thm}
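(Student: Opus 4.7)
The plan is to exploit the analytic structure of the stellar function and reduce the theorem to a factorization statement. First, I would observe that $F^{\star}_\psi$ is an entire function on $\mathbb{C}^{M}$ of order at most $2$: this follows from the definition in Eq.~\eqref{eq:stellar_function} together with normalization of $|\psi\rangle$, since the boundedness $Q_\psi(\mathbf{z})\leq 1$ of the Husimi function gives $|F^{\star}_\psi(\mathbf{z})|^2 \leq \pi^M e^{|\mathbf{z}|^2}$. Under the hypothesis that the total degree is $r<\infty$, a multidimensional Hadamard-type factorization for order-$2$ entire functions yields a decomposition
\begin{equation*}
F^{\star}_\psi(\mathbf{z}) = P(\mathbf{z})\, e^{Q(\mathbf{z})},
\end{equation*}
where $P$ is a polynomial of total degree $r$ and $Q$ is a polynomial of degree at most $2$ in $\mathbf{z}$. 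In the single-mode case $M=1$, this reduces to the usual Hadamard factorization and matches the definition of stellar rank as the number of zeros; for $M\geq 2$, total degree of $P$ recovers the definition given just above.

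The second step is to identify the Gaussian envelope $e^{Q(\mathbf{z})}$ as the stellar function of a pure Gaussian state. Pure Gaussian states $\hat{U}_{\mathbf{S}}|0\rangle$ are known to have stellar functions of exponential-quadratic form, and conversely every normalizable quadratic exponential arises from some $\mathbf{S} \in \mathrm{Sp}(2M,\mathbb{R})$ (possibly composed with a displacement). Thus I can select a symplectic $\mathbf{S}_\psi$ such that $\hat{U}_{\mathbf{S}_\psi}|0\rangle$ reproduces precisely the Gaussian factor $e^{Q(\mathbf{z})}$, up to a residual Gaussian normalization that can be reabsorbed into $P$ without increasing its degree.

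Defining $|C_\psi\rangle := \hat{U}_{\mathbf{S}_\psi}^\dagger |\psi\rangle$, the transformation rule of the stellar function under Gaussian unitaries then guarantees that $F^{\star}_{C_\psi}(\mathbf{z})$ is purely polynomial of total degree $r$. Expanding this polynomial in monomials and using the Fock-state correspondence $F^{\star}_{|\mathbf{n}\rangle}(\mathbf{z}) = \prod_i z_i^{n_i}/\sqrt{n_i!}$, I read off the coefficients to obtain $|C_\psi\rangle = \sum_{\mathbf{n}:\sum_i n_i\leq r} c_{\mathbf{n}}|n_1,\ldots,n_M\rangle$, which is a core state in the claimed sense. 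Applying $\hat{U}_{\mathbf{S}_\psi}$ on both sides then yields the decomposition $|\psi\rangle = \hat{U}_{\mathbf{S}_\psi}|C_\psi\rangle$.

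The main technical obstacle lies in the second step: making the correspondence between pure Gaussian states and admissible quadratic exponentials precise. One must verify that the quadratic form $Q$ extracted from the Hadamard factorization is compatible with $L^2$-normalizability of $|\psi\rangle$ (so that the inverse Gaussian unitary indeed produces a normalizable vector), and that the parametrization via the metaplectic representation of $\mathrm{Sp}(2M,\mathbb{R})$ surjects onto the admissible family of quadratic forms. The multidimensional Hadamard factorization itself also requires some care, but can be reduced to the one-dimensional statement by restricting to complex lines and exploiting the polynomial growth condition on $P$.
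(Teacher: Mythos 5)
The paper itself does not prove this theorem; it is recalled from the stellar-formalism literature (the reference cited next to the statement), and your overall strategy --- bound the growth of $F^{\star}_{\psi}$ via the Husimi function, extract a Gaussian envelope, identify it with $\hat{U}_{\mathbf{S}_\psi}\vert 0\rangle$ through the metaplectic representation, and read off the core state from the residual polynomial using $F^{\star}_{\vert \mathbf{n}\rangle}(\mathbf{z})=\prod_i z_i^{n_i}/\sqrt{n_i!}$ --- is the same one used there. The growth bound and the final read-off of $\vert C_\psi\rangle$ are fine as sketched.

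The genuine soft spot is the ``multidimensional Hadamard-type factorization.'' For $M=1$ the classical Hadamard theorem applies and your argument is complete. For $M\geq 2$, restricting $F^{\star}_{\psi}$ to complex lines gives a factorization $P\,e^{Q}$ line by line, but these local factorizations do not glue into a global one for free: you would need to know that the zero divisor of $F^{\star}_{\psi}$ is an algebraic hypersurface of total degree $r$, solve a second-Cousin-type problem on $\complex^{M}$ to write $F^{\star}_{\psi}=P\cdot G$ with $G$ entire and nonvanishing, and only then invoke the order-$2$ bound to conclude $G=e^{Q}$ with $Q$ of degree at most $2$. Alternatively --- and this is what the paper implicitly does --- the multimode finite-stellar-rank hypothesis is \emph{defined} by the existence of the form $P\,e^{Q}$ (``counting the total degree of the stellar function''), in which case this step is vacuous and the only real content is the correspondence between admissible quadratic exponentials and pure Gaussian states, which you correctly single out as the technical crux. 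Either way, ``reduce to one dimension by restricting to complex lines,'' as written, does not deliver the global factorization and should be replaced by one of these two routes.
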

 This representation neatly decouples the non-Gaussian properties of the state from the Gaussian effects, making the analysis of mode-intrinsic and genuine non-Gaussian entanglement quite straightforward. Notice, however, that the core state $\vert C_\psi \rangle$ associated to a pure non-Gaussian state $\vert \psi \rangle$ is defined up to a passive Gaussian unitary transformation on the $M$ modes, since these transformations send core states into core states preserving the stellar rank. Moreover, while $\hat{U}_{\mathbf{S}_\psi}$ uniquely defines the symplectic transformation $\mathbf{S}_\psi$, in general the unitary transformation might also include a multimode displacement, which acts as an affine translation on phase space, on top of the linear action of $\mathbf{S}_\psi$. Since displacements are local in every mode basis and they never generate entanglement, we will overlook them in the following. 

 Below we present two previously proven results \cite{chabaud_holomorphic_2022} concerning, respectively, the passive and the Gaussian separability of non-Gaussian states. In our context, they serve the main purpose of reducing the separability properties of generic non-Gaussian states to the passive separability of their core states. 
 \begin{lema}\label{lema:passive_separability}
    \textbf{Passive separability of pure non-Gaussian states } 
    A bosonic quantum state $|\psi\rangle$ with stellar representation is $|\psi\rangle= \hat{U}_{\mathbf{S}_\psi} |C_{\psi}\rangle $ is \emph{passive separable} with respect to a $K$-partition $I_K$, if and only if its core state is \emph{separable} with respect to $I_K$ in a basis in which the squeezing operations in $\hat{U}_{\mathbf{S}_\psi}$ are local with respect to $I_K$. Equivalently:  
    \begin{equation*}
\hat{U}_{\mathbf{O}_1}|C_\psi\rangle=\bigotimes_{i=1}^{K}|C_{\psi}^{(I_i)}\rangle.
    \end{equation*}
    where $\mathbf{O}_1$ is related to the Bloch-Messiah decomposition of $\mathbf{S}_\psi$ by:
    \[ \mathbf{S}_\psi = \mathbf{O}_2 \left[ \bigoplus_{j=1}^{M} \diag(e^{2 r_j }, e^{-2r_j}) \right] \mathbf{O}_1 \]
    In that case, any passive linear optics operation of the form $\left(\bigotimes_{i=1}^{K}\hat{U}_{\mathbf{O}^{'}_{I_{i}}} \right) \hat{U}_{\mathbf{O}_2}^{\dagger}$ disentangles the state with respect to the $K$-partition $I_K$, where $\mathbf{O}^{'}_{I_i}$ are arbitrary orthogonal symplectic transformations involving only modes $I_i$. 
 \end{lema}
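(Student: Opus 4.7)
The plan is to prove the two implications separately, leveraging (i) the Bloch--Messiah decomposition of symplectic transformations and (ii) the fact that the stellar decomposition $\lvert\psi\rangle = \hat U_{\mathbf S_\psi}\lvert C_\psi\rangle$ is unique up to multiplication of $\mathbf S_\psi$ on the right by a passive orthogonal symplectic and the compensating action of its inverse on $\lvert C_\psi\rangle$.

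For the ``if'' direction, suppose that the Bloch--Messiah decomposition $\mathbf S_\psi = \mathbf O_2 [\bigoplus_j \diag(e^{2r_j},e^{-2r_j})] \mathbf O_1$ is chosen so that the squeezing layer is block-diagonal with respect to the partition $I_K$, and that $\hat U_{\mathbf O_1}\lvert C_\psi\rangle = \bigotimes_{i=1}^K \lvert C_\psi^{(I_i)}\rangle$. Because the single-mode squeezers act locally inside each group $I_i$ by assumption, the middle layer factorizes as $\hat U_{\mathbf D} = \bigotimes_i \hat U_{\mathbf D_i}$, so
\begin{equation*}
\hat U_{\mathbf O_2}^{\dagger}\lvert\psi\rangle = \hat U_{\mathbf D}\,\hat U_{\mathbf O_1}\lvert C_\psi\rangle = \bigotimes_{i=1}^K \hat U_{\mathbf D_i}\lvert C_\psi^{(I_i)}\rangle,
\end{equation*}
which is already a product state. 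Any subsequent local passive operations $\hat U_{\mathbf O'_{I_i}}$ on each block preserve the product structure, yielding the disentangling operation in the stated form.

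For the ``only if'' direction, assume $\hat U_{\mathbf O}\lvert\psi\rangle = \bigotimes_i \lvert\psi_i\rangle$ for some $\mathbf O \in \mathcal K(M)$. Apply the stellar decomposition (Theorem~\ref{Stellar representation}) to each local factor, $\lvert\psi_i\rangle = \hat U_{\mathbf S^{(i)}}\lvert C^{(i)}\rangle$, with $\mathbf S^{(i)}$ supported on modes $I_i$. This yields a new stellar representation
\begin{equation*}
\lvert\psi\rangle = \hat U_{\mathbf O^{T}\bigoplus_i \mathbf S^{(i)}}\,\Bigl(\bigotimes_i \lvert C^{(i)}\rangle\Bigr),
\end{equation*}
whose core $\lvert C'\rangle=\bigotimes_i\lvert C^{(i)}\rangle$ is manifestly separable with respect to $I_K$. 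Bloch--Messiah on each factor gives $\mathbf S^{(i)} = \mathbf O_2^{(i)} \mathbf D^{(i)} \mathbf O_1^{(i)}$, and taking the direct sum yields a Bloch--Messiah decomposition of the global symplectic $\mathbf O^T \bigoplus_i\mathbf S^{(i)}$ in which the squeezing layer $\bigoplus_i \mathbf D^{(i)}$ is block-diagonal with respect to $I_K$, and whose initial passive layer $\bigoplus_i \mathbf O_1^{(i)}$ is also block-diagonal and annihilates (trivially) the separable structure of $\lvert C'\rangle$.

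The main technical point is then to transport this property back to the original stellar pair $(\mathbf S_\psi,\lvert C_\psi\rangle)$. By uniqueness of the stellar decomposition up to a passive symplectic $\mathbf P$, one has $\mathbf O^T\bigoplus_i \mathbf S^{(i)} = \mathbf S_\psi \mathbf P$ and $\lvert C'\rangle = \hat U_{\mathbf P}^{\dagger}\lvert C_\psi\rangle$. Setting $\mathbf O_2 := \mathbf O^T\bigoplus_i \mathbf O_2^{(i)}$, $\mathbf D := \bigoplus_i \mathbf D^{(i)}$, and $\mathbf O_1 := \bigl(\bigoplus_i \mathbf O_1^{(i)}\bigr)\mathbf P^{-1}$ produces a Bloch--Messiah decomposition of $\mathbf S_\psi$ with local squeezings, and
\begin{equation*}
\hat U_{\mathbf O_1}\lvert C_\psi\rangle = \hat U_{\bigoplus_i \mathbf O_1^{(i)}}\lvert C'\rangle = \bigotimes_i \hat U_{\mathbf O_1^{(i)}}\lvert C^{(i)}\rangle,
\end{equation*}
as required. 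The main obstacle throughout is keeping track of the passive gauge $\mathbf P$ in the stellar decomposition and verifying that reshuffling it into the first passive layer of the Bloch--Messiah preserves both orthogonality and symplecticity, which is immediate since the set $\mathcal K(M)$ is closed under products.
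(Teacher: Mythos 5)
Your proof is correct. The paper itself does not spell out a proof of this lemma --- it defers to Ref.~\cite{chabaud_holomorphic_2022}, describing the argument only as ``a direct combination of the definition of passive separability and the stellar decomposition'' --- and your argument is exactly that combination made explicit: the ``if'' direction pushes the block-diagonal squeezing layer through the separable core, and the ``only if'' direction builds a new stellar pair from the local stellar decompositions of the factors and transports it back via the passive gauge freedom of the core state, which is the uniqueness statement the paper asserts. Two cosmetic remarks: you implicitly use that each tensor factor $\lvert\psi_i\rangle$ has finite stellar rank (true, since passive operations preserve the total stellar rank and it is additive over tensor factors, but worth stating before invoking Theorem~\ref{Stellar representation} locally), and ``annihilates (trivially) the separable structure'' should read ``preserves''.
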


 \begin{lema}\label{theorem:genuine_non-Gaussian_ent}
     \textbf{Gaussian separability of pure non-Gaussian states} A finite stellar rank bosonic quantum state $|\psi\rangle$, with stellar representation given by $|\psi\rangle=\hat{U}_{\mathbf{S}_\psi}|C_{\psi}\rangle$, is \emph{Gaussian separable} with respect to a $K$-partition $I_K$ if and only if its core state $|C_\psi\rangle$ is \emph{passive separable} with respect to the same partition. If $\hat{U}_{\mathbf{O}_{C_\psi}}$ is a passive unitary Gaussian transformation that disentangles $|C_\psi \rangle$, then a unitary Gaussian transformation that disentangles the state $|\psi\rangle$ is given by $\hat{U}_{\mathbf{O}_{C_\psi}} \hat{U}_{\mathbf{S}_\psi}^{\dagger}$.
 \end{lema}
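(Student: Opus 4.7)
The plan is to prove the two directions separately, exploiting the uniqueness of the stellar decomposition up to passive Gaussian unitaries that was highlighted just after Theorem \ref{Stellar representation}. The ``if'' direction is essentially by construction: if $\hat{U}_{\mathbf{O}_{C_\psi}}|C_\psi\rangle = \bigotimes_{i=1}^{K} |C_{\psi}^{(I_i)}\rangle$ is separable with respect to $I_K$, then applying the Gaussian unitary $\hat{U}_{\mathbf{O}_{C_\psi}} \hat{U}_{\mathbf{S}_\psi}^{\dagger}$ to $|\psi\rangle=\hat{U}_{\mathbf{S}_\psi}|C_\psi\rangle$ produces the same separable state, which simultaneously proves Gaussian separability and yields the explicit disentangling transformation claimed in the second part of the lemma.

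For the converse, I would start from a symplectic $\mathbf{S}\in\mathrm{Sp}(2M,\mathbb{R})$ such that $\hat{U}_{\mathbf{S}}|\psi\rangle=\bigotimes_{i=1}^{K}|\psi_{I_i}\rangle$. Since the stellar rank is additive under tensor products, each factor $|\psi_{I_i}\rangle$ has finite stellar rank and hence admits its own stellar decomposition $|\psi_{I_i}\rangle=\hat{U}_{\mathbf{S}_i}|C_i\rangle$ with $\mathbf{S}_i$ symplectic on the modes of $I_i$ and $|C_i\rangle$ a core state on those modes. Assembling the blocks gives a second stellar representation of $\hat{U}_{\mathbf{S}}|\psi\rangle$, namely $\hat{U}_{\bigoplus_i\mathbf{S}_i}|C'\rangle$ with the manifestly separable core state $|C'\rangle := \bigotimes_{i=1}^{K}|C_i\rangle$.

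Equating this with the stellar representation inherited from $|\psi\rangle$, one obtains $\hat{U}_{\mathbf{S}}\hat{U}_{\mathbf{S}_\psi}|C_\psi\rangle = \hat{U}_{\bigoplus_i\mathbf{S}_i}|C'\rangle$. Both $|C_\psi\rangle$ and $|C'\rangle$ are core states, and they differ by the Gaussian unitary $\hat{U}_{\bigoplus_i\mathbf{S}_i}^{\dagger}\hat{U}_{\mathbf{S}}\hat{U}_{\mathbf{S}_\psi}$. Invoking the uniqueness of the stellar representation up to passive unitaries, this Gaussian transformation must actually be passive, \ie there exists $\mathbf{O}_{C_\psi}\in\mathcal{K}(M)$ with $\hat{U}_{\mathbf{O}_{C_\psi}}|C_\psi\rangle=|C'\rangle$. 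Since $|C'\rangle$ is separable with respect to $I_K$, this establishes passive separability of the core state.

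The main obstacle is the uniqueness step: one has to argue that whenever two core states are related by a Gaussian unitary, that unitary is necessarily passive. The underlying reason is that any non-trivial squeezing or displacement would map a finite Fock-basis expansion into an infinite one, contradicting the fact that both sides are core states; this is precisely the content of the uniqueness statement for the stellar decomposition in \cite{chabaud_holomorphic_2022} that needs to be cited carefully, since all the rest of the argument reduces to bookkeeping of Gaussian factors.
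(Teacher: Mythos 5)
Your proposal is correct and follows the same route the paper intends: the paper does not spell out the argument but states that it is "a direct combination of the definition of passive separability and the stellar decomposition," deferring to \cite{chabaud_holomorphic_2022}, and your write-up is exactly that combination — the forward direction by explicit construction of $\hat{U}_{\mathbf{O}_{C_\psi}}\hat{U}_{\mathbf{S}_\psi}^{\dagger}$, and the converse via the uniqueness of the core state up to passive unitaries. The only nitpick is that the uniqueness statement guarantees the two core states are related by \emph{some} passive unitary, not that the assembled Gaussian operator $\hat{U}_{\bigoplus_i\mathbf{S}_i}^{\dagger}\hat{U}_{\mathbf{S}}\hat{U}_{\mathbf{S}_\psi}$ is itself passive, but since only its action on $|C_\psi\rangle$ matters your conclusion stands.
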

 The proof of this theorem is a direct combination of the definition of passive separability and the stellar decomposition and can be found on \cite{chabaud_holomorphic_2022}. Both of the results presented above highlight the relevance of the analysis of the entanglement properties of core states for the analysis of non-Gaussian entanglement. In what follows we will focus on the analysis of the passive separability of core states, \ie states with a finite support on Fock space. In passing by, we highlight that such a task is also of particular relevance on single-photon based platforms for optical quantum information processing \cite{knill_scheme_2001}.

\section{Passive separability of core states}\label{sec:passive-sep_core-state}
In this section we analyze the passive separability of pure core states. This problem can be related to the analysis of factorization properties of polynomials. Indeed, given a core state with stellar rank $r$:
\begin{equation*}
    |C\rangle=\sum_{\mathbf n|\sum n_i\leq r} C_{\mathbf n}|n_1,...,n_M\rangle,
\end{equation*}
its stellar function is a complex polynomial, which we will refer to as the stellar polynomial:
\begin{equation}\label{eq:stellar_polynomial}
    p_{C}(\mathbf z)=\sum_{\mathbf n|\sum n_i\leq r}\frac{C_{\mathbf n}}{\sqrt{n_1! ...n_M!}} z_1^{n_1}...z_M^{n_M}. 
\end{equation}
As a consequence, we can analyze the entanglement of the quantum state $|C\rangle$ by studying the factorizations of its stellar polynomial. \par 
\begin{figure}[htbp]
\centering
\includegraphics[width =0.5\linewidth]{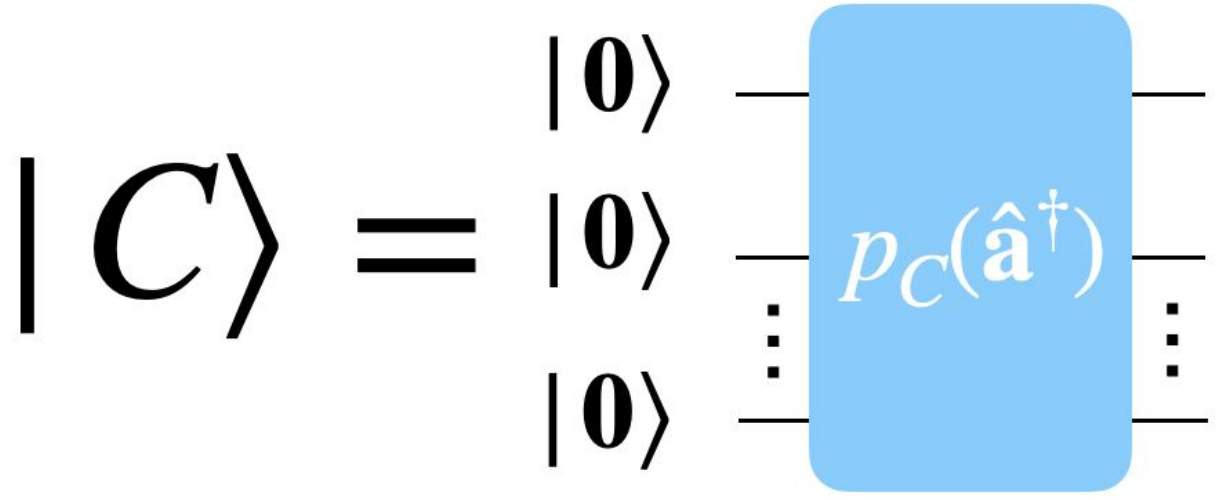}
\caption{Every core state can be obtained from the vacuum state by the implementation of the corresponding stellar polynomial of the creation operators over all the populated modes.  }
\label{fig:stellar_poly}
\end{figure}
Moreover, the stellar polynomial provides a proxy to analyze the preparation of the state $|C\rangle$. The latter can be obtained from the vacuum state $|C\rangle=p_C(\hat{\mathbf a}^\dagger)|\mathbf 0\rangle$, \ie by implementing the corresponding polynomial of creation operators. Such implementation is always possible through some customized boson sampling post-selected schemes \cite{andrei2025}. 
\begin{propos}\label{propos:entanglement_vs_polynomial_factorization}
    A core state $|C\rangle$, of stellar rank $r$, over $M$ modes, is separable with respect to a $K$-partition $I_K$, if and only if its stellar polynomial is factorizable as 
    \begin{equation*}
        p_{C}(\mathbf z)=\prod_{i=1}^{K} p_{i}(\mathbf z_{I_i}),
    \end{equation*}
    where $z_{I_i}\in \complex^{2 |I_{i}|}$, are the phase space coordinates over the modes in $I_i$. The following equality holds $\sum_{i=1}^{K} \text{deg}(p_i)=r$, where $\text{deg}(p_i)$, is the degree of the polynomial $p_i$. 
\end{propos}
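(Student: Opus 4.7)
The plan is to prove both directions of the equivalence by exploiting the canonical correspondence between a core state and its stellar polynomial, namely $|C\rangle=p_C(\hat{\mathbf a}^\dagger)|\mathbf 0\rangle$, together with the fact that creation operators acting on different modes commute and that the vacuum factorizes as $|\mathbf 0\rangle=\bigotimes_{i=1}^K |0\rangle_{I_i}$.

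For the ``if'' direction, assuming a factorization $p_C(\mathbf z)=\prod_{i=1}^K p_i(\mathbf z_{I_i})$ on disjoint sets of variables, I would substitute the creation operators and use that $[\hat a_j^\dagger,\hat a_k^\dagger]=0$ to rearrange the product as $p_C(\hat{\mathbf a}^\dagger)=\prod_i p_i(\hat{\mathbf a}^\dagger_{I_i})$, where each factor acts only on its own group of modes. Acting on the vacuum, each factor produces a state $|C_i\rangle\in \mathcal H_{I_i}$ on the corresponding tensor factor, so $|C\rangle=\bigotimes_{i=1}^K |C_i\rangle$, which is the required separability.

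For the ``only if'' direction, assume $|C\rangle=\bigotimes_{i=1}^K |C_{\psi}^{(I_i)}\rangle$. Each marginal $|C_{\psi}^{(I_i)}\rangle$ is itself a core state (it has finite Fock support, inherited from that of $|C\rangle$), so it admits its own stellar polynomial $p_i(\mathbf z_{I_i})$ with $|C_{\psi}^{(I_i)}\rangle=p_i(\hat{\mathbf a}^\dagger_{I_i})|0\rangle_{I_i}$. Running the above argument backward, the product polynomial $\tilde p(\mathbf z):=\prod_i p_i(\mathbf z_{I_i})$ satisfies $\tilde p(\hat{\mathbf a}^\dagger)|\mathbf 0\rangle=|C\rangle=p_C(\hat{\mathbf a}^\dagger)|\mathbf 0\rangle$. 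The key step is then injectivity of the map $p\mapsto p(\hat{\mathbf a}^\dagger)|\mathbf 0\rangle$: since distinct monomials in the $\hat a_j^\dagger$ applied to $|\mathbf 0\rangle$ yield orthogonal (up to normalization) Fock states, this map is injective on $\complex[z_1,\dots,z_M]$, forcing $p_C=\tilde p$.

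Finally, the degree identity $\sum_i \deg(p_i)=r$ follows because the polynomials $p_i$ depend on pairwise disjoint sets of variables, so $\deg(\prod_i p_i)=\sum_i \deg(p_i)$, and by definition $\deg(p_C)$ is the stellar rank $r$ of $|C\rangle$. I do not expect any genuine obstacle; the only point that requires care is justifying the injectivity of the polynomial-to-state correspondence and ensuring that each marginal $|C_{\psi}^{(I_i)}\rangle$ has finite stellar rank so that $p_i$ genuinely exists as a polynomial, but both are immediate from the finiteness of the Fock support of $|C\rangle$ and the orthogonality of Fock states.
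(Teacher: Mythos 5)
Your proof is correct. The paper states this proposition without an explicit proof, treating it as immediate from the identity $|C\rangle = p_C(\hat{\mathbf a}^\dagger)|\mathbf 0\rangle$ introduced just before it; your argument --- commutativity of the creation operators and factorization of the vacuum for the ``if'' direction, and injectivity of $p \mapsto p(\hat{\mathbf a}^\dagger)|\mathbf 0\rangle$ together with the finiteness of each tensor factor's Fock support for the ``only if'' direction --- supplies exactly the details the paper leaves implicit, including the degree identity.
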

This statement allows us to further abstract the analysis of the passive separability of core states to the analysis of geometric and algebraic properties of their stellar polynomials. As a phase space representation of quantum states, the stellar function transforms in a very neat way under passive linear optics operations: 
\begin{equation}\label{eq:LO_transform_core_states}
    p_{\hat{U}_{\mathbf{O}} |C\rangle}(\mathbf z)=p_{C}(\mathbf{U} \mathbf z),
\end{equation}
 where $\mathbf{U} \in \mathrm{U}(M)$ is the unitary $M \times M$ matrix describing the action of the real symplectic transformation ${\mathbf{O}}\in \mathcal K(M)$ on the complex coordinates $\mathbf z\in \complex^{M}$. This effect amounts to a rotation without rescaling of the axes in phase space. This implies the following statement:
\begin{propos}\label{propos:passive_sep_and_poly_factorization}
    A core state $|C\rangle$ over $M$ modes, whose stellar polynomial is $p_{C}(\mathbf z)$, is passive separable with respect to a $K$-partition $I_K$  if and only if there exists a unitary transformation $\mathbf{U}\in \mathrm{U}(M)$, such that 
    \begin{equation*}
        p_{C}(\mathbf{U} \mathbf z)=\prod_{i=1}^{K} p_{i}(\mathbf z_{I_i}).
    \end{equation*}
\end{propos}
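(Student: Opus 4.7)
The plan is to derive this statement as an immediate consequence of Proposition \ref{propos:entanglement_vs_polynomial_factorization} combined with the transformation rule (\ref{eq:LO_transform_core_states}) for stellar polynomials under passive linear optics. The key observation is that there is a standard bijection between the compact passive subgroup $\mathcal K(M)\subset \Sp(2M,\reals)$ and the unitary group $\mathrm U(M)$, under which an orthogonal symplectic transformation $\mathbf O$ corresponds to a unique $\mathbf U\in\mathrm U(M)$ acting on the complex phase-space coordinates. Thanks to this correspondence, quantifying over $\mathbf O\in \mathcal K(M)$ in Definition \ref{def:passive_sep} is equivalent to quantifying over $\mathbf U\in \mathrm U(M)$ in the polynomial identity.

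For the forward direction, I would assume that $|C\rangle$ is passive separable with respect to $I_K$, so by Definition \ref{def:passive_sep} there exists $\mathbf O\in \mathcal K(M)$ such that $\hat U_{\mathbf O}|C\rangle=\bigotimes_{i=1}^{K}|C_i\rangle$. Because passive linear optics preserves both the finite-support-on-Fock-space property and the total degree of the stellar function, each factor $|C_i\rangle$ is itself a core state on the modes $I_i$, so the right-hand side is a separable core state over $M$ modes. Applying Proposition \ref{propos:entanglement_vs_polynomial_factorization} to $\hat U_{\mathbf O}|C\rangle$ gives a factorization $p_{\hat U_{\mathbf O}|C\rangle}(\mathbf z)=\prod_{i=1}^{K}p_i(\mathbf z_{I_i})$, and (\ref{eq:LO_transform_core_states}) identifies the left-hand side with $p_C(\mathbf U \mathbf z)$. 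The reverse direction runs the same chain in the opposite order: given $\mathbf U$ with $p_C(\mathbf U \mathbf z)=\prod_i p_i(\mathbf z_{I_i})$, let $\mathbf O$ be the associated orthogonal symplectic transformation; then (\ref{eq:LO_transform_core_states}) shows that $p_{\hat U_{\mathbf O}|C\rangle}(\mathbf z)$ factorizes across $I_K$, so Proposition \ref{propos:entanglement_vs_polynomial_factorization} yields that $\hat U_{\mathbf O}|C\rangle$ is separable across $I_K$, i.e. $|C\rangle$ is passive separable by Definition \ref{def:passive_sep}.

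The one non-cosmetic point to verify is precisely the invariance of the core-state class under $\mathcal K(M)$: that $\hat U_{\mathbf O}$ maps core states of stellar rank $r$ to core states of stellar rank $r$, equivalently that the transformation $\mathbf z\mapsto \mathbf U \mathbf z$ preserves the polynomial nature and total degree of $p_C$. This is what legitimates applying Proposition \ref{propos:entanglement_vs_polynomial_factorization} to the rotated state in both directions, and it follows directly from the linearity of the substitution and the fact that passive operations do not mix creation and annihilation operators. Beyond this, the argument is purely bookkeeping of the $\mathbf O \leftrightarrow \mathbf U$ correspondence and of the tensor-product structure induced by the partition $I_K$.
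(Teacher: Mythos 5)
Your proposal is correct and follows exactly the route the paper intends: the paper gives no explicit proof, simply asserting that the transformation rule \eqref{eq:LO_transform_core_states} together with Proposition \ref{propos:entanglement_vs_polynomial_factorization} ``implies'' the statement, and your argument is precisely the fleshed-out version of that implication. Your explicit check that $\mathcal K(M)$ preserves the core-state class (so that Proposition \ref{propos:entanglement_vs_polynomial_factorization} may legitimately be applied to $\hat U_{\mathbf O}|C\rangle$) is a point the paper leaves tacit, and it is worth making.
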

It is important to highlight that $\mathbf{U}$, as a finite dimensional unitary matrix, is a distinct object from $\hat{U}_{\mathbf{O}}$, the unitary operator acting on the Hilbert space; they are only related through the isomorphism between $\mathcal{K}(M)$ and $\mathrm{U}(M)$ mapping the real phase-space $\reals^{2M}$ to the complex one $\complex^M$ and sending $\mathbf{O} \mapsto \mathbf{U}$. Since the same matrix $\mathbf{U}$ transforms the annihilation operators of the different modes in the Heisenberg picture and each mode $\hat{a}_j$ corresponds to a variable $z_j$ in the stellar representation, we can view its action either as a linear unitary change of variables of the stellar polynomial or as a mode basis change for the core state; correspondingly, selecting an orthonormal basis of vectors in $\complex^M$ fixes a choice of independent modes and variables and we shall heavily rely on this viewpoint in the following. The analysis of the emergence of passive separability of core states is thus equivalent to the analysis of the factorizability of polynomials in terms of independent sets of variables, a nontrivial problem that we tackle by leveraging on techniques developed in the following section. \par

\section{Structural characterization of stellar polynomials}
\label{sec:structural_charact_stellar_poly}
In this section we present a systematic way to determine whether a multi-mode core state is passive separable with respect to some $K$-partition $I_K$ of the $M$ modes. 
There are two integer parameters that jointly increase the complexity of the problem: the stellar rank of the state and the number of modes. The former increases the degree, and the latter the number of variables, of the stellar polynomial of the state.
\par
\subsection{Essential variables of stellar polynomials}
As for the number of variables involved in the polynomial, it is natural to wonder whether it can be reduced. More often than not, a highly multi-mode state may effectively populate only a few modes, with all the rest in vacuum. The problem of finding this effective number of modes has previously been addressed, for example, by considering the coherence matrix of a state \cite{Treps_2020}. Moreover, the effective number of modes of a core state is equal to the symplectic rank of non-Gaussian quantum states \cite{mele_symplectic_2025}, which has recently been proven to be a relevant resource for quantum computational advantage with bosonic systems. For our purposes, it is convenient to compute the effective number of modes populated by a core state as the number of essential variables of the corresponding stellar polynomial, according to the following definition:
\begin{defn}
     Consider a polynomial $p(\mathbf{z})$ of degree $r$ in $M$ complex variables $\mathbf{z} \in \complex^M$. The \emph{space of essential variables} of $p$ is the vector space $\mathbb{E}(p)$ defined as:
    \begin{equation}
        \mathbb{E}(p) \ = \ \mathrm{span} \left\{ \vec{\nabla} p (\mathbf{z}) \vert \ \ \mathbf{z} \in \complex^M \right\} \ \subseteq \ \complex^M
    \end{equation}
    where the span is taken over all possible pointwise evaluations of the vector field $\vec{\nabla} p: \complex^M \to \complex^M$. Given any $\mathbf{v} \in \mathbb{E}(p)$, the linear combination $\mathbf{v}^T \mathbf{z}$ is called an \emph{essential variable} of $p(\mathbf{z})$ whereas, given $\mathbf{w} \in \mathbb{E}^{\perp}(p)$, the linear combination $\mathbf{w}^T \mathbf{z}$ is called an \emph{irrelevant variable} with respect to $p(\mathbf{z})$.
\end{defn}
In the state picture, this implies that the state, in the mode basis in which it was given, can be obtained as described in Fig.\ref{fig:dimentional reduction}, by first applying a polynomial of creation operators over the effective modes and then performing an inexpensive mode basis change, which may reduce considerably the complexity of the state preparation. \par
\begin{figure}[htbp]
\centering
\includegraphics[width =0.95\linewidth]{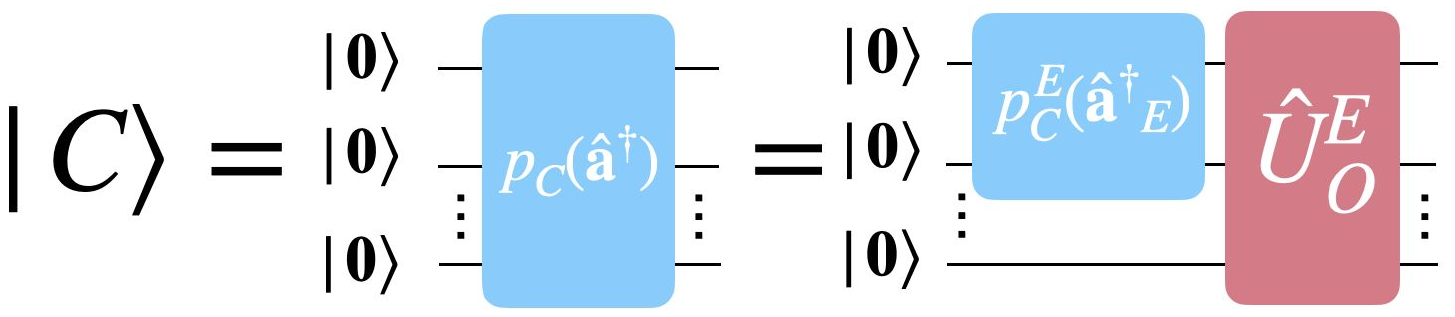}
\caption{Finding the essential variables of the stellar polynomial $p_C(\mathbf z)$, effectively decreases both the complexity of the state preparation and simplifies the analysis of the mode-intrinsic entanglement properties of the state. The modes on the vacuum (those that are not affected by the polynomial of creation and annihilation operators $p_C^E(\mathbf{\hat a_E^{\dagger}})$), are irrelevant for the analysis of the passive separability.   }
\label{fig:dimentional reduction}
\end{figure}
We will say that a multivariate polynomial $p(\mathbf{z})$ is \emph{reduced to its essential variables} if all of its variables $z_1,\dots,z_M \in \complex$ are essential. The following result, already known in algebraic geometry \cite{Carlini_2006}, establishes that $\mathbb{E}(p)$ defines the smallest space of variables upon which $p$ intrinsically depends in a nontrivial way, thereby motivating the nomenclature of "essential variables".
\begin{thm}
    The space of essential variables of a stellar polynomial $p(\mathbf{z})$ of degree $r$ in $M$ complex variables is covariant under linear unitary transformations of its space of variables. In particular, its dimension $M_E(p) = \dim \mathbb{E}(p)$ is invariant and equal to the minimum number of variables that appear with degree $\geq 1$ in $p(\mathbf{U} \mathbf{z})$ under any possible unitary linear transformation $\mathbf{U} \in \mathrm{U}(M)$ of the variables. Moreover, if $\mathbf{w} \in \complex^M$ is in the orthogonal complement $\mathbb{E}^{\perp}(p)$ of $\mathbb{E}(p)$, then $p$ is independent on the linear combination of variables $\mathbf{w}^T \mathbf{z}$. Correspondingly, $\mathbb{E}^{\perp}(p)$ identifies the space of modes that are in the vacuum with respect to the core state associated with $p$.
 \begin{proof}
 Let $\mathbf{U}: \complex^M \to \complex^M$ be any linear unitary transformation of the variables and denote by $(p \circ \mathbf{U})(\mathbf{z}) = p(\mathbf{U} \mathbf{z})$ the polynomial in the transformed variables. Then:
 \begin{equation}
\vec{\nabla}[p \circ \mathbf{U}] = [\mathbf{U}^T \vec{\nabla}p] \circ \mathbf{U}
 \end{equation}
Since $\mathbf{U}$ is invertible and, in constructing $\mathbb{E}(p)$, we take every possible $\mathbf{z} \in \complex^M$, the overall effect of the unitary change of variables on the space of essential variables is to covariantly rotate every vector in $\mathbb{E}(p)$ by $\mathbf{U}^T$, preserving its dimension. Also, if $\mathbf{w} \in \mathbb{E}^{\perp}(p)$ defines an irrelevant variable for $p$, we have that $\forall \mathbf{z} \in \complex^M: [\mathbf{w}^T \vec{\nabla}p](\mathbf{z}) = 0$, i.e. $p$ is everywhere constant along the direction $\mathbf{w}$, hence it does not depend on the irrelevant variable $\mathbf{w}^T \mathbf{z}$. Moreover, it is immediate to check that deriving a stellar polynomial of a core state with respect to $z_j$ amounts to acting with $\hat{a}_j$ on the core state itself; therefore, if $\mathbf{w}^T \mathbf{z}$ is an irrelevant variable, the core state is annihilated by $\mathbf{w}^T \vec{\hat{a}}$, meaning that the corresponding mode is in the vacuum state. Finally, notice that the converse is also true: $p$ constant along $\mathbf{w} \implies \mathbf{w} \in \mathbb{E}^{\perp}(p)$, therefore $p$ cannot be constant along any vector in $\mathbb{E}(p)$. This fact, together with the covariance of the space of essential variables, implies that $p$ can never depend on less than $\dim \mathbb{E}(p)$ variables under any unitary change of variables; in particular, if the initial space of variables, isomorphic to $\complex^M$, is partitioned into essential and irrelevant variables ($\mathbb{E}(p) \oplus \mathbb{E}^{\perp}(p)$) by introducing a basis for the two spaces, $p$ will precisely depend only on the first $\dim \mathbb{E}(p)$ new variables. 

\end{proof}
\end{thm}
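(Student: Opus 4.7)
The plan is to use the chain rule on the gradient to transfer the unitary change of variables from the polynomial to its gradient field, then leverage the two-way equivalence between directional invariance of $p$ and orthogonality to $\mathbb{E}(p)$. First I would differentiate $p(\mathbf{U}\mathbf{z})$ componentwise, obtaining $\vec{\nabla}[p \circ \mathbf{U}](\mathbf{z}) = \mathbf{U}^T [\vec{\nabla} p](\mathbf{U}\mathbf{z})$. Since $\mathbf{U}$ is invertible and $\mathbf{z}$ ranges over all of $\complex^M$, the image of $\vec{\nabla}[p \circ \mathbf{U}]$ is exactly $\mathbf{U}^T$ applied to the image of $\vec{\nabla} p$, so $\mathbb{E}(p \circ \mathbf{U}) = \mathbf{U}^T \mathbb{E}(p)$. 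This gives covariance and, in particular, invariance of the dimension $M_E(p)$.

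Next I would establish the "only if" direction of the characterization of irrelevant variables. If $\mathbf{w} \in \mathbb{E}^{\perp}(p)$, then by definition $\mathbf{w}^T \vec{\nabla} p(\mathbf{z}) = 0$ for every $\mathbf{z} \in \complex^M$; this is precisely the directional derivative of $p$ along $\mathbf{w}$, so $p$ is constant along $\mathbf{w}$ and hence depends only on the combinations $\mathbf{w}^T \mathbf{z}$ vanishes from. The converse direction needs a brief argument: if $p$ is constant along some direction $\mathbf{w}$, then $\mathbf{w}^T \vec{\nabla} p$ vanishes identically, so $\mathbf{w}$ is orthogonal to every vector in $\mathbb{E}(p)$, i.e., $\mathbf{w} \in \mathbb{E}^{\perp}(p)$. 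Combining both directions yields that $\mathbb{E}^{\perp}(p)$ is exactly the space of directions of constancy.

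For the minimality claim, I would pick any orthonormal basis adapted to the decomposition $\complex^M = \mathbb{E}(p) \oplus \mathbb{E}^{\perp}(p)$, and let $\mathbf{U}$ be the unitary that sends the canonical basis to this adapted basis. In the new variables, $p(\mathbf{U}\mathbf{z})$ depends only on the first $M_E(p)$ coordinates (by the irrelevant-variable statement applied to the last $M - M_E(p)$ basis vectors). Conversely, no unitary change of variables can reduce the number of genuinely appearing variables below $M_E(p)$: if $p(\mathbf{V}\mathbf{z})$ depended nontrivially on at most $k < M_E(p)$ variables, then by the covariance already proved $\mathbb{E}(p \circ \mathbf{V})$ would be contained in a $k$-dimensional subspace, contradicting $\dim \mathbb{E}(p \circ \mathbf{V}) = \dim \mathbb{E}(p) = M_E(p)$.

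Finally, for the quantum-mechanical identification of $\mathbb{E}^{\perp}(p)$ with the vacuum modes of the associated core state, I would recall from the stellar representation that $\partial_{z_j} p_C(\mathbf{z})$ corresponds to $\hat{a}_j |C\rangle$ through the stellar map, so a directional derivative $\mathbf{w}^T \vec{\nabla} p_C$ corresponds to the action of $\mathbf{w}^T \vec{\hat{a}}$ on $|C\rangle$. Thus $\mathbf{w} \in \mathbb{E}^{\perp}(p_C)$ is equivalent to $|C\rangle$ being annihilated by the collective mode operator $\mathbf{w}^T \vec{\hat{a}}$, which is precisely the statement that the mode defined by $\mathbf{w}$ is in the vacuum. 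The main obstacle is expected to be merely notational: making the duality $\partial_{z_j} \leftrightarrow \hat{a}_j$ rigorous in the multimode setting and carefully handling the converse implication in the second step; the rest is essentially the chain rule and the elementary fact that orthogonal complement is an involution on subspaces of a finite-dimensional Hilbert space.
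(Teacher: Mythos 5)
Your proposal is correct and follows essentially the same route as the paper's own proof: the chain rule identity $\vec{\nabla}[p\circ\mathbf{U}]=[\mathbf{U}^{T}\vec{\nabla}p]\circ\mathbf{U}$ giving covariance of $\mathbb{E}(p)$, the two-way equivalence between directional constancy and membership in $\mathbb{E}^{\perp}(p)$, minimality via the invariance of $\dim\mathbb{E}(p)$, and the correspondence $\partial_{z_j}\leftrightarrow\hat{a}_j$ for the vacuum-mode identification. The only differences are cosmetic (your contradiction phrasing of the minimality step is marginally more explicit than the paper's), so no further comparison is needed.
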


As a side note, we notice that an efficient way to compute $\mathbb{E}(p)$ is to first find $\mathbb{E}^{\perp}(p)$ as the kernel of the matrix $\mathbf{G}$, known as the \emph{catalecticant} of $p$ in algebraic geometry, which is defined as:
 \begin{equation}
        \mathbf{G} \ := ( \mathrm{coeff}[\vec{\nabla} p]_1 , \dots , \mathrm{coeff}[\vec{\nabla} p]_M )
    \end{equation}
where $\mathrm{coeff}[\vec{\nabla}p]_{j}$, the column $j$ of $\mathbf{G}$, is the list of coefficients of the monomials in $\frac{\partial p}{\partial z_j}$ according to some arbitrary but fixed order\footnote{Notice that $\mathbf{G}$ is an $N \times M$ matrix with $N = \binom{M+r-1}{r-1}$ being the number of coefficients in a polynomial of order $r-1$ in $M$ variables.}. Clearly, any vector of $\complex^M$ in $\ker(\mathbf{G})$ is orthogonal to $\vec{\nabla}p(\mathbf{z})$ at all points $\mathbf{z}\in \complex^M$ and since the kernel is a vector space, we have $\ker(\mathbf{G})=\mathbb{E}^{\perp}(p)$. This method has a twofold role in our construction:  
on the one hand, it is an efficient tool to reduce the dimensionality of the problem we have to deal with; on the other hand, it provides a powerful subroutine to investigate the non-Gaussian entanglement structure of a core state, since it can be used to tell if two stellar polynomials share an essential variable or not. \par

\par
\subsection{Factorization into irreducible polynomials}\label{sec:irred_factorization}
In this subsection we present the last piece required to build up a scheme for determining the passive separability of multimode pure core states, from the analysis of its stellar polynomial. Any polynomial over any field, can be uniquely factorized into a product of irreducible polynomials of lower order \cite{Kopylov_2025}, akin to a prime number factorization of natural numbers, 
\begin{equation}
    p(\mathbf z)=\prod_{k=1}^{F} p_{k}^{n_k}(\mathbf z),
\end{equation}
where $p_k$ is an irreducible polynomial, , \ie a polynomial that admits no further factorization, whose degree will be denoted by $r_k$ such that $\sum_k n_k r_k=r$, where $n_k$ is the power with which $p_k$ appears in the factorization and $r$ is the total degree of the stellar polynomial. For polynomials in a single variable and homogeneous polynomials in two variables, the irreducible factors are always linear polynomials ($\forall k \in \{1,\dots,F\}: r_k =1$), while irreducible polynomials in 3 or more variables can be nonlinear.\par
\begin{figure}[htbp]
\centering
\includegraphics[width =0.95\linewidth]{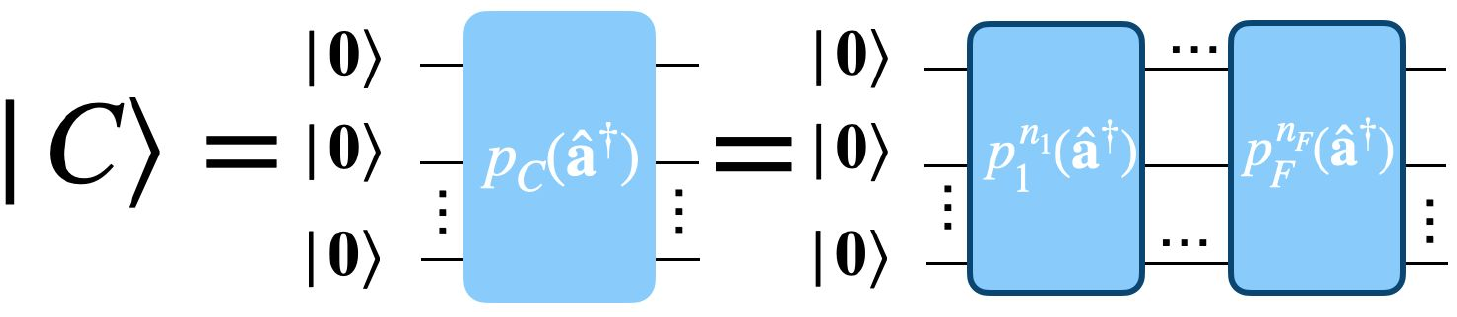}
\caption{The irreducible factorization of the stellar polynomial of a core state $|C\rangle$, breaks down the state preparation into a sequential application of all the irreducible factors. Moreover it simplifies the analysis of the mode-intrinsic entanglement to the analysis of the geometrical relations among them.   }
\label{fig:irreducible_factorization}
\end{figure}
Factorizing an arbitrary complex polynomials over many variables can be a demanding computational task. To begin with, exact factorization is an unstable numerical problem, since small perturbations of the coefficients are enough to render a polynomial irreducible \cite{gao_approximate_2004}.  Nevertheless, there are several algorithms that perform this task for exact input coefficients \cite{Koepf2021}, in polynomial time of the input size. This means, in general, in the dense representation of the polynomial, a polynomial dependence on the degree and exponential on the number of variables. Interestingly, if the multivariate polynomial is sparse on the monomial basis, there is no way in general to exploit the reduced dimension of the input to speed up the factorization \cite{Kaltofen2005}, except under specific conditions, like low degree. \par
On several symbolic programming languages like Sage \cite{sagemath} or Mathematica \cite{Mathematica}, factorization tools are available for generic multivariate polynomials with complex coefficients. On the other hand, there are some algorithms for factorization with inexact coefficients \cite{gao_approximate_2004, wu_numerical_2017}, which try to find the factorizable polynomial that is closest to the input in some well defined topology. The algorithm described in \cite{wu_numerical_2017}, is available in a Matlab \cite{MATLAB} package called NAClab \cite{Naclab} for numerical polynomial manipulation. A detailed exploration of those algorithms and details related to the robustness to numerical precision are beyond the scope of the present work.  \par
From a quantum optics perspective, this factorization implies that the state can be prepared by successively implementing each of the irreducible polynomials of the creation operators, as shown in Fig.\ref{fig:irreducible_factorization}. A recipe for implementing any such arbitrary polynomial was recently proposed \cite{andrei2025}, based also on algebraic geometry concepts. The geometric relation between all groupings of the irreducible factors of the polynomial capture all entanglement properties of the corresponding state. In the following, we develop a framework to analyze these geometrical relations and fully characterize the entanglement structure of the systems in the most atomic possible way.   

\par
\subsection{Atomic decomposition of stellar polynomials}
In this section, we use the essential variables space method presented above, together with the factorization into irreducible factors, to develop an \textit{atomic decomposition} of the stellar polynomial. This decomposition, based on the geometrical relation between the different irreducible factors of the polynomial, contains the full information about the non-Gaussian entanglement properties of the corresponding core state. Moreover, this decomposition provides a clear picture of all kinds of discrete non-Gaussian operations necessary to produce non-Gaussian entanglement. Before introducing it, we lay the groundwork by setting some definitions for the characterization of the geometrical relation between different polynomials and prove some necessary results. 

\begin{defn}
    Two multivariate polynomials $p_{1},p_{2}$ with coefficients in $\complex$ are said to be \emph{mutually disjoint} if it is possible to find a common linear change of all of their $m$ variables such that $p_{1}$ depends only on the first $m-n_1$ transformed variables and $p_{2}$ depends only on the last $n_{1}$ transformed variables, for some $n_{1} \in \mathbb{N}$. Otherwise, if such a linear change of variables does not exist, they are said to be \emph{concatenated} and we write $p_{1} \sim p_{2}$.
\end{defn}
Note that $\sim$ is reflexive ($p \sim p$ for any polynomial) and symmetric ($p_{1} \sim p_{2} \iff p_{2} \sim p_{1} $), but not transitive: in general $(p_{1} \sim p_2 )\land (p_2 \sim p_3) \nRightarrow p_1 \sim p_3$. The following result provides an efficient way to verify whether two polynomials are disjoint or not, based on the essential variables space of one of them. This is the key result in this section. 
\begin{lema}
\label{lema:orthogonalnonessential}
    Two multivariate polynomials $p_{1}$ and $p_{2}$ are mutually disjoint if and only if the essential variables of either of them are irrelevant variables with respect to the other; equivalently, $\mathbb{E}(p_{1}) \perp \mathbb{E}(p_2)$. 
\end{lema}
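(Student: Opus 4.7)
The plan is to reduce both directions of the equivalence to the covariance of essential variable spaces under unitary changes of variables, which was just established in the preceding theorem. The key technical fact is that passive changes of variables act on essential spaces through $\mathbf{U}^T$, which is itself unitary and therefore preserves orthogonality of subspaces.

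For the forward direction, I would assume $p_1$ and $p_2$ are mutually disjoint and let $\mathbf{U}$ be the common linear change of variables realizing their separation. Then $p_1 \circ \mathbf{U}$ depends only on $z_1,\ldots,z_{m-n_1}$ and $p_2 \circ \mathbf{U}$ only on $z_{m-n_1+1},\ldots,z_m$, which by the essential variable theorem forces $\mathbb{E}(p_1 \circ \mathbf{U}) \subseteq \mathrm{span}\{e_1,\ldots,e_{m-n_1}\}$ and $\mathbb{E}(p_2 \circ \mathbf{U}) \subseteq \mathrm{span}\{e_{m-n_1+1},\ldots,e_m\}$. These are mutually orthogonal coordinate subspaces, and applying $(\mathbf{U}^T)^{-1}$, which is still unitary, to recover $\mathbb{E}(p_1)$ and $\mathbb{E}(p_2)$ preserves the orthogonality.

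For the converse, I would assume $\mathbb{E}(p_1) \perp \mathbb{E}(p_2)$ and set $d_j = \dim \mathbb{E}(p_j)$, so that $d_1 + d_2 \le m$. The construction of the change of variables is then direct: pick an orthonormal basis of $\complex^m$ in which the first $d_1$ vectors span $\mathbb{E}(p_1)$, the last $d_2$ vectors span $\mathbb{E}(p_2)$, and the intermediate vectors, if any, complete the basis within $\mathbb{E}^\perp(p_1) \cap \mathbb{E}^\perp(p_2)$; orthogonality of the two essential spaces is exactly what makes this completion possible. The corresponding unitary $\mathbf{U}$ adapts the coordinates simultaneously to $\mathbb{E}(p_1) \oplus \mathbb{E}^\perp(p_1)$ and $\mathbb{E}(p_2) \oplus \mathbb{E}^\perp(p_2)$, so by the essential variable theorem $p_1 \circ \mathbf{U}$ depends only on the first $d_1 \le m - d_2$ variables and $p_2 \circ \mathbf{U}$ only on the last $d_2$; taking $n_1 = d_2$ then matches the definition of mutual disjointness.

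The statement is thus essentially a corollary of the essential variable theorem, and the main delicate step is the bookkeeping of how a unitary change of variables acts on essential spaces (through $\mathbf{U}^T$ rather than $\mathbf{U}$), which is already handled in the earlier proof. Beyond this, the forward direction is automatic and the converse reduces to a standard orthonormal basis completion, guaranteed to exist exactly when the two essential spaces are orthogonal.
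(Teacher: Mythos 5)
Your proof is correct and follows essentially the same route as the paper's: both directions rest on the covariance of the essential-variable spaces under unitary changes of variables, with disjointness read off from (or used to produce) a basis adapted to $\mathbb{E}(p_1)$ and $\mathbb{E}(p_2)$. Your version is only slightly more explicit about the $\mathbf{U}^T$ bookkeeping and the orthonormal basis completion through $\mathbb{E}^{\perp}(p_1)\cap\mathbb{E}^{\perp}(p_2)$, which the paper leaves implicit.
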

\begin{proof}
    The sufficiency argument is a direct consequence of the definition of disjoint polynomials: if $\forall \mathbf{v} \in \mathbb{E}(p_2)$ the variable $\mathbf{v}^T \mathbf{z}$ is irrelevant with respect to $p_1$, meaning that $\mathbf{v} \in \mathbb{E}^{\perp}(p_1)$, then $\mathbb{E}(p_{1}) \perp \mathbb{E}(p_2)$. Therefore, when $p_1$ and $p_2$ are reduced to their respective essential variables, defined by any choice of basis of  $\mathbb{E}(p_{1})$ and $\mathbb{E}(p_2)$, they do not share any variable, proving that they are disjoint. As for the necessity, suppose that $p_{1} \not\sim p_2$ and we pick the right variables so that $p_{1}$ depends upon $z_{1},...,z_{n}$ while $p_{2}$ depends upon $z_{n+1},...,z_{M}$, where no variable is shared. Then, in this basis of variables, it is already manifest that $\mathbb{E}(p_1) \perp \mathbb{E}(p_2)$, since vectors in $\mathbb{E}(p_1)$ can have only the first $n$ entries different from zero, while vectors in $\mathbb{E}(p_2)$ will start with $n$ components equal to zero. Therefore, recalling that essential spaces are covariant and their orthogonality is thus independent under unitary changes of variables, if $p_1 \not \sim p_2$, the essential variables of each one of them are irrelevant with respect to the other. 
\end{proof}

\begin{lema}
\label{lemma:groupingfactors}
    Let $\{p_1,\dots,p_n \}$ and $\{q_1,\dots, q_m \}$ be two sets of polynomials over $M$ complex variables. If $\forall j \in \{1,\dots,n\} , \ \forall k \in \{1,\dots,m\}: p_j \not \sim q_k$, then it is possible to find a unitary change of their variables such that $\{p_1,\dots,p_n \}$ and $\{q_1,\dots, q_m \}$ involve two disjoint sets of independent variables.
\end{lema}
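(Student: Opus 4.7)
The plan is to reduce the lemma to a statement about the essential variable subspaces and to use Lemma \ref{lema:orthogonalnonessential} as the key bridge. First I would translate the hypothesis: by Lemma \ref{lema:orthogonalnonessential}, the assumption $p_j \not\sim q_k$ for every pair $(j,k)$ is equivalent to the pairwise orthogonality relations $\mathbb{E}(p_j) \perp \mathbb{E}(q_k)$ in $\complex^M$.

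Next, I would aggregate these spaces by setting $V_1 := \sum_{j=1}^n \mathbb{E}(p_j)$ and $V_2 := \sum_{k=1}^m \mathbb{E}(q_k)$, both regarded as linear subspaces of $\complex^M$. A short sesquilinearity argument then upgrades the pairwise orthogonality of the summands to the global statement $V_1 \perp V_2$: for $v = \sum_j c_j v_j$ with $v_j \in \mathbb{E}(p_j)$ and $w = \sum_k d_k w_k$ with $w_k \in \mathbb{E}(q_k)$, one has $\langle v, w \rangle = \sum_{j,k} c_j \overline{d_k}\langle v_j, w_k\rangle = 0$.

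Then I would construct the desired change of variables by choosing an orthonormal basis of $\complex^M$ adapted to the orthogonal decomposition $V_1 \oplus V_2 \oplus R$, where $R := (V_1 + V_2)^\perp$. The unitary $\mathbf{U} \in \mathrm{U}(M)$ mapping the standard basis of $\complex^M$ to this adapted basis is the candidate transformation. To conclude, I would invoke the theorem on the covariance and characterization of essential variables: since $\mathbb{E}(p_j) \subseteq V_1$, every direction in $V_2 \oplus R$ lies in $\mathbb{E}^{\perp}(p_j)$ and is therefore an irrelevant variable for $p_j$, so in the transformed coordinates $p_j$ depends only on the first $\dim V_1$ variables; symmetrically, each $q_k$ depends only on the next $\dim V_2$ variables. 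These are two disjoint blocks of independent variables, which is exactly the conclusion.

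The main obstacle is conceptual rather than computational: one must be careful that the definition of mutually disjoint requires a \emph{common} linear change of variables splitting every $p_j$ from every $q_k$ simultaneously, and not merely pairwise splittings that could in principle require different bases. The aggregated-subspace argument above is precisely what forces a single adapted basis to work uniformly, and this uniformity is supplied by the covariance theorem for $\mathbb{E}(\cdot)$ rather than by any ad hoc manipulation of the polynomials themselves.
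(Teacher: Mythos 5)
Your proposal is correct and follows essentially the same route as the paper: both invoke Lemma \ref{lema:orthogonalnonessential} to convert the non-concatenation hypothesis into pairwise orthogonality of essential spaces, aggregate these into two mutually orthogonal subspaces, and then pick an adapted orthonormal basis as the common unitary change of variables. Your version is slightly more careful in two respects the paper glosses over — replacing the paper's ``union'' of subspaces by the sum $\sum_j \mathbb{E}(p_j)$ and making the sesquilinearity step and the residual orthogonal complement $R$ explicit — but these are refinements of the same argument, not a different one.
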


\begin{proof}
    Using the hypotheses and Lemma \ref{lema:orthogonalnonessential}, we have that $\forall j \in \{1,\dots,n\} , \ \forall k \in \{1,\dots,m\}$ we have $\mathbb{E}(p_j) \perp \mathbb{E}(q_k)$; thus, for any fixed $j$, taking the union of the essential spaces of the $q_k$, we find $\mathbb{E}(p_j) \perp \left\{ \bigcup_{k=1}^{m}  \mathbb{E}(q_k)\right\}$. Since this holds for every $j \in \{1,\dots,n\}$, taking the union of the sets on the left-hand side we arrive at:
    \begin{equation}
        \left\{ \bigcup_{j=1}^{n} \mathbb{E}(p_j) \right\} \perp \left\{ \bigcup_{k=1}^{m}  \mathbb{E}(q_k)\right\}
    \end{equation}
    Then any orthonormal basis of $\complex^M$ that splits into an orthonormal basis of the first union and an orthonormal basis of the second union defines a change of variables in which $\{p_1,\dots,p_n \}$ do not share any variable with $\{q_1,\dots, q_m \}$.
\end{proof}

\begin{defn}
    A multivariate polynomial is said to be \emph{atomic} if it cannot be factorized into two nontrivial and mutually disjoint polynomials. In particular, irreducible polynomials are also atomic. Given a generic polynomial $p$ on $\complex$ in $M$ (essential, without loss of generality) variables, we define its \emph{atomic factorization} as:
    \[    p = \prod_{k=1}^f q^{\mathrm{at}}_{k}\]
    where each factor $q^{\mathrm{at}}_{k}$ is an atomic polynomial and they are all mutually disjoint. The \emph{atomic partition} associated with $p$ is defined as the partition of the integer $M$ induced by the number of essential variables in each factor of the atomic factorization of $p$:
    \begin{equation}
     \begin{aligned}
         & I^{\mathrm{at}} (p) \  = \ \left\{ \dim \mathbb{E}(q^{\mathrm{at}}_{1}), \dots, \dim \mathbb{E}(q^{\mathrm{at}}_{f}) \right\} , \\
       &  \sum_{k=1}^{f} \dim \mathbb{E}(q^{\mathrm{at}}_{k}) \  = \  M
         \end{aligned}   
    \end{equation}
 \end{defn}

\begin{thm}
    The atomic factorization of a multivariate, complex polynomial is unique (up to orderings of the factors) and it is possible to find a choice of variables such that each atomic factor depends on an independent set of variables.
\end{thm}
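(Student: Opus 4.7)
The plan is to build the atomic factorization from the unique irreducible factorization $p = \prod_{k=1}^{F} p_k^{n_k}$ (which exists by the result cited in section \ref{sec:irred_factorization}) and then argue its essential uniqueness. First I would define a binary relation on the set of distinct irreducible factors $\{p_1,\dots,p_F\}$ by the concatenation relation $\sim$. Since $\sim$ is reflexive and symmetric but not transitive, I take its transitive closure to obtain an equivalence relation $\approx$ whose classes $C_1,\dots,C_f$ partition $\{p_1,\dots,p_F\}$. Grouping the irreducible factors with their multiplicities according to these classes, I set $q^{\mathrm{at}}_\alpha := \prod_{k:\, p_k \in C_\alpha} p_k^{n_k}$, so that $p = \prod_{\alpha=1}^{f} q^{\mathrm{at}}_\alpha$ is a candidate atomic factorization.

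Next I would verify this is actually an atomic factorization. For disjointness across factors, note that for $p_i \in C_\alpha$ and $p_j \in C_\beta$ with $\alpha \neq \beta$ one has $p_i \not\sim p_j$ by construction, so Lemma \ref{lema:orthogonalnonessential} yields $\mathbb{E}(p_i) \perp \mathbb{E}(p_j)$. Because $\vec{\nabla}(fg) = g\vec{\nabla}f + f\vec{\nabla}g$, the essential space of a product is contained in the linear span of the essential spaces of its factors, so this passes to $\mathbb{E}(q^{\mathrm{at}}_\alpha) \perp \mathbb{E}(q^{\mathrm{at}}_\beta)$, and Lemma \ref{lemma:groupingfactors} shows that distinct $q^{\mathrm{at}}_\alpha$ are mutually disjoint. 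For atomicity of each $q^{\mathrm{at}}_\alpha$, suppose by contradiction that $q^{\mathrm{at}}_\alpha = u v$ nontrivially with $u$ and $v$ mutually disjoint. Unique irreducible factorization inside $q^{\mathrm{at}}_\alpha$ forces $u$ and $v$ to be products of some of the $p_k^{n_k}$ with $p_k \in C_\alpha$, thereby splitting $C_\alpha$ into two nonempty subsets $A_u$ and $A_v$. Disjointness of $u$ and $v$ would require $p_i \not\sim p_j$ for all $p_i \in A_u$ and $p_j \in A_v$, contradicting the fact that $C_\alpha$ is a single $\approx$-class: some chain $p_i \sim p_{i_1} \sim \cdots \sim p_j$ linking $A_u$ to $A_v$ must contain an adjacent $\sim$-related pair straddling the two subsets.

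For uniqueness, given any other atomic factorization $p = \prod_l r^{\mathrm{at}}_l$, unique irreducible factorization again implies each $r^{\mathrm{at}}_l$ is a product of some of the $p_k^{n_k}$, inducing a partition $\pi$ of $\{p_1,\dots,p_F\}$. Pairwise disjointness of distinct $r^{\mathrm{at}}_l$ forces irreducibles in different blocks of $\pi$ to be mutually $\not\sim$, so each block is a union of $\approx$-classes. Conversely, atomicity of each $r^{\mathrm{at}}_l$ forbids any further disjoint splitting, and the same chain argument used above rules out a block being a strict union of more than one $\approx$-class. Hence $\pi = \{C_1,\dots,C_f\}$ and the atomic factors coincide up to reordering and the usual unit-scalar ambiguity of irreducible factorization.

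Finally, to realize each atomic factor on an independent set of variables, I would choose an orthonormal basis of $\complex^M$ adapted to the orthogonal decomposition $\complex^M = \bigoplus_{\alpha=1}^{f} \mathbb{E}(q^{\mathrm{at}}_\alpha)$, which is valid because $p$ is assumed reduced to its $M$ essential variables and the summands are pairwise orthogonal by the previous step; this is equivalent to iterating Lemma \ref{lemma:groupingfactors}, peeling off one atomic factor at a time. The main obstacle is the non-transitivity of $\sim$: one must pass to its transitive closure and then show that the resulting groups are simultaneously atomic and, across groups, still mutually disjoint. Once this bookkeeping is handled cleanly via Lemmas \ref{lema:orthogonalnonessential} and \ref{lemma:groupingfactors}, both existence and uniqueness reduce to the standard rigidity of the irreducible factorization.
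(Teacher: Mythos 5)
Your proposal is correct and follows essentially the same route as the paper: the transitive closure of $\sim$ is precisely the partition into connected components of the paper's structural graph $\mathcal{G}$, and you invoke the same two lemmas to get cross-component disjointness and the same connectivity/chain argument for atomicity and uniqueness. Your write-up is in fact somewhat more explicit than the paper's (notably the chain argument ruling out a disjoint splitting of a single component and the observation that $\mathbb{E}(fg)\subseteq \mathbb{E}(f)+\mathbb{E}(g)$), but the underlying argument is the same.
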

\begin{proof}
    Given a polynomial $p$ in $M$ variables, we first consider its decomposition into irreducible factors:
    \begin{equation}
        p \ = \ \prod_{j=1}^{F} p_j
    \end{equation}
    where each $p_j$ is an irreducible polynomial. Since this is unique and the factors cannot be further split, any other factorization of $p$ must come from the grouping of some of its irreducible factors into larger ones. We need to show that there is only one way to do so in such a way that the larger factors are all atomic polynomials and they are mutually disjoint. To see this, it is convenient to introduce a graph $\mathcal{G}$ whose vertex set is given by the $F$ irreducible factors of $p$ and the (undirected) edges are such that $p_j$ and $p_k$ are connected if and only if $p_j \sim p_k$ ($\mathbb{E}(p_j) \not \perp \mathbb{E}(p_k)$), in other words, if and only if there is no linear unitary change of variables under which the two polynomials involve distinct sets of the new variables. Again, since the essential spaces of the polynomials are covariant and the irreducible factorization is invariant, $\mathcal{G}$ is also invariant under different choices of the variables. It is clear that each connected component of $\mathcal{G}$ corresponds to an atomic polynomial, since all of their factors (the vertices) cannot be disjoint from the rest, given that they share some variables with other factors under all possible linear unitary transformations of the variables. Moreover, two distinct connected components represent disjoint atomic polynomials since, as a consequence of Lemma \ref{lemma:groupingfactors} and of the fact that their irreducible factors are pairwise disjoint, the essential variables of either of the two atomic polynomials are irrelevant variables for the other. The uniqueness stems from the fact that this is the only grouping of the irreducible factors leading to a factorization into atomic and mutually disjoint polynomials. 
\end{proof}


\begin{figure}[htbp]
\centering
\includegraphics[width =0.95\linewidth]{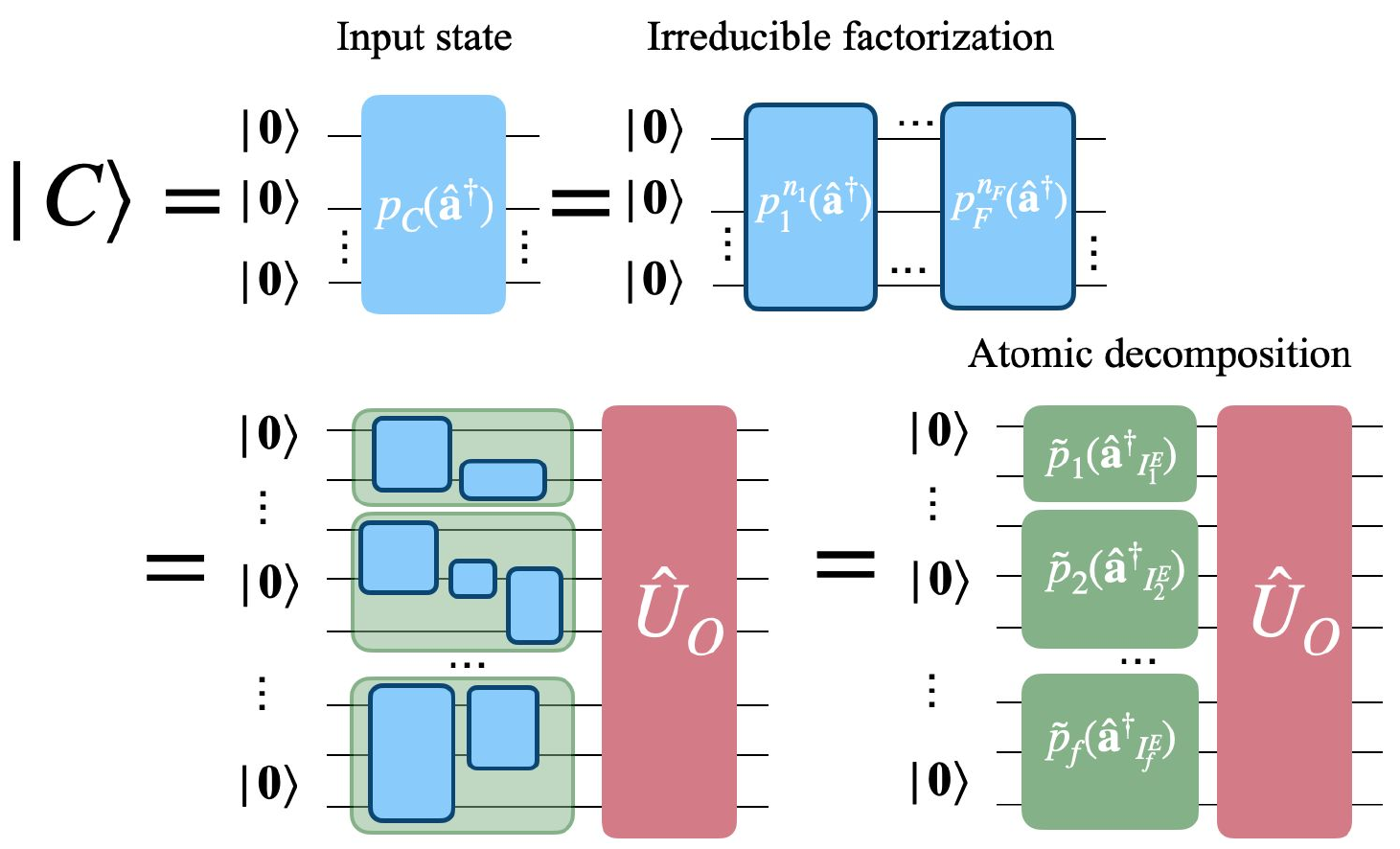}
\caption{Schematic representation of the connection between the irreducible factorization and the atomic decomposition of the stellar polynomial $p_C$, on the state preparation setting. We can naturally see that both decompositions offer complementary points of view, with the former providing a sequential decomposition of the process, and the latter providing a parallelization of the process. Moreover, the scheme to the bottom right of the figure makes clear that this decomposition contains the most fine grained information possible about the mode-intrinsic entanglement of the state.    }
\label{fig:atomic decomposition}
\end{figure}
Fig.\ref{fig:atomic decomposition} offers a pictorial representation of the connection between the factorization into irreducible polynomials and the atomic decomposition of the states. This provides a schematic representation of the discussion on the proof of the theorem and of the algorithmic procedure described bellow.\par
From the point of view of entanglement theory and questions on the existence of $K$-partitions, a core state associated with an \emph{atomic} stellar polynomial (reduced to its essential variables) is not passively separable in any nontrivial partition, because any factorization of its stellar polynomial will involve factors that share some essential variable. Vice versa, given the atomic partition $I^{\mathrm{at}}$ associated with the stellar polynomial of any core state, one can straightforwardly answer any question related to the existence of $K$-partitions and passive separability, which can be understood from the schematic of the atomic decomposition, provided in Fig.\ref{fig:atomic decomposition}. This partition can be derived using standard factorization algorithms to find the $F$ irreducible factors and then computing at most $F$ essential variables spaces to establish which factors belong to which connected component. In the following algorithm, we provide a procedure to determine whether a given state, described by the stellar polynomial $P$ on $M$ modes/variables, can be factorized by a mode basis transformation into some $K$-partition $I_K$.

\begin{itemize}
    \item[\bf{Step 1}] Compute the stellar polynomial $p_{C}$ associated with $\vert C \rangle$. Determine the number $E \leq M$ of essential variables, corresponding to the minimum number of independent modes that are not in the vacuum with respect to $\vert C \rangle$ and reduce $p_{\star}$ to its essential variables.
    \item[\bf{Step 2}] Factor $p_{C}$ into its $F$ irreducible factors $p_1,...,p_F$, where the number $F$ will not depend on the mode / variable basis.
    \item[\bf{Step 3}] Compute $\mathbb{E}(p_1)$ and pick an orthonormal basis of $\complex^E$ that splits into a basis of $\mathbb{E}(p_1)$ and a basis of $\mathbb{E}^{\perp}(p_1)$. If, in the variables defined by this basis, the irreducible factor $p_j$ shares variables with $p_1$, then they are grouped as part of the same atomic polynomial. 
    \item[\bf{Step 4}] Once all the irreducible polynomials concatenated to $p_1$ have been found, pick the first irreducible factor $p_k$ disjoint to $p_1$ and repeat step $3$ with $p_k$ replacing $p_1$ and checking all polynomials $p_j$ with $j \geq 2, j \neq k$. 
    Proceed to the next step when no more unconnected, irreducible factors are left to test. 
    
     \item[\bf{Step 5}] Compute the atomic partition $I^{\mathrm{at}}(p_C)$ by counting the number of essential variables in each atomic polynomial obtained at the end of the previous steps. If $m = M -E > 0$, add a string of $1$ of length $m$ to $I^{\mathrm{at}}(p_C)$, representing the irrelevant modes of the initial core state. $\vert C \rangle$ is passive-separable into the $K$-partition $I_{K}$ of its $M$ modes if and only if $I_{K}$ is a \emph{coarsening} of $I^{\mathrm{at}}(p_{C})$. In particular, if $K > A$, the number of atomic factors in the atomic decomposition of $p_C$, no $K$-partition with respect to which $\vert C \rangle$ is passively separable can exist. 
    
   
\end{itemize}

Notice that Step 4 and 5 can be grouped together if one computes all the essential variables spaces $\mathbb{E}(p_{1}),\dots,\mathbb{E}(p_F)$ and determine the atomic decomposition from the orthogonality properties of them, but the procedure described above will spare the computation of some of these spaces by just determining the connected components instead of finding all possible connections. \par 
At any rate, the most intensive part of the algorithm is given by Step 2, the factorization of the polynomial into irreducible factors, whose complexity was discussed in \ref{sec:irred_factorization}.
The rest of the steps have a relatively lower complexity. Notably, the complexity of building up the structural graph increases linearly in the number of polynomial factors, multiplied by the complexity of computing the essential variables spaces. Using singular value decomposition, the complexity of computing the essential spaces is at most $\mathcal O(M^2 \binom{M+r}{r})$, where $M$ is the number of involved modes and $r$ is the degree of the polynomial. The overall complexity of finding the structural graph once the decomposition into irreducible factors has been performed is at most $\mathcal{O}(F M^2 \binom{M+r}{r})$. \par
An implementation of the method described above to obtain the atomic decomposition of a given stellar polynomial (steps 2 to 5), can be found in \cite{repo_atomic}. The code is implemented on Wolfram Mathematica, and only handles polynomials defined with exact coefficients. For a numerical implementation, that deals with polynomials specified with approximate coefficients, the factorization step should be performed using, for example, the NACLab Matlab package \cite{Naclab}. 

\subsubsection{Implications on the preparation of states with mode-intrinsic entanglement}
Beyond analyzing the separability with respect to a particular partition, the atomic decomposition of a core state provides a full characterization of the entanglement properties of the state and of the complexity of the state generation. As already pointed out in \cite{andrei2025}, a given state can be generated by concatenating, in an arbitrary order, the implementation of the $F$ irreducible factors of its stellar polynomial. There are different ways in which mode-intrinsic entanglement can arise from such a structure, precisely through the generation of atomic polynomials. On the one hand, an irreducible polynomial over all the relevant variables is already an atomic polynomial. On the other hand, even with a product of linear forms, one can build an atomic polynomial, as long as the linear forms are non-orthogonal.
This kind of conditions arise when several photon additions in non-orthogonal modes are performed \cite{sperling_mode-independent_2019, Kopylov_2025}. \par
In the first case, we will speak about \emph{algebraic entanglement} and we will say that any two essential modes of an irreducible stellar polynomial are algebraically entangled, since their mode-intrinsic entanglement is inherently derived from the irreducibility property. The experimental generation of this kind of entanglement requires, in general, a rather involved state preparation procedure. An explicit construction for any such polynomial can be found in \cite{andrei2025}, as a post-selected, boson sampling like setup. A deterministic generation of such states would require the unitary implementation of hihgly nonlinear hamiltonians \cite{Lloyd1999,arzani2025, Zhang2021}, which is in principle possible, but highly chalenging.  
\par
Notice, however, that given two essential modes of an arbitrary \emph{atomic} polynomial, it might be meaningless to ask whether they are algebraically entangled or not, since they may be neither essential nor irrelevant for any single irreducible polynomial in its factorization, as we will clarify in the following examples. Similarly, it might be tempting to associate a graph to a stellar polynomial such that each vertex is an essential variable of one of its atomic factors and two vertices are linked if they are involved in the same irreducible polynomial. While this procedure will correctly identify the connected components and the atomic decomposition, the links inside each connected components will not all be basis independent: indeed, there is no way in general to choose the essential variables of an atomic polynomial in such a way that all of its irreducible factors are also reduced to their essential variables at the same time.  \par

\subsubsection{Some illustrative examples}

In what follows, we consider three examples that illustrate the working of the method, and the kind of information we can extract from it. We present examples that showcase the different ways in which mode-intrinsic entanglement can appear, and how they can be combined. The first two examples only involve linear irreducible factors, corresponding to states that can be generated by sequences of photon addition. As a final example, we show a state whose irreducible decomposition includes an irreducible factor, thus involving algebraic entanglement.  \\

\paragraph{Let us first consider the state}

\begin{equation}
    |\psi\rangle=\frac{1}{\mathcal N}\hat a_1^\dagger\left(\frac{\hat a_1^\dagger+\hat a_2^\dagger}{\sqrt{2}}\right)\hat a_4^\dagger\left(\frac{\hat a_3^\dagger-\hat a_4^\dagger}{\sqrt{2}}\right)|0\rangle,
\end{equation}
Written in this form, the entanglement properties of the state are evident, since the state is obtained by four photon additions, the first two of which are orthogonal to the other two, but not among themselves. This implies that the state can be factorized into two sets of two modes each.  Nevertheless, this is not as evident when we expand the state and look at its stellar polynomial
\begin{equation}\label{eq:poly_state1}
    P_\psi=\frac{1}{2 \mathcal N}\left(z_1^2z_3^2-z_1^2z_3z_4+z_1z_2z_3^2-z_1z_2z_3z_4\right).
\end{equation}
If we apply the atomic decomposition algorithm, we recover the following  structural graph. 
\begin{figure}[htbp]
\centering
\includegraphics[width =0.4\linewidth]{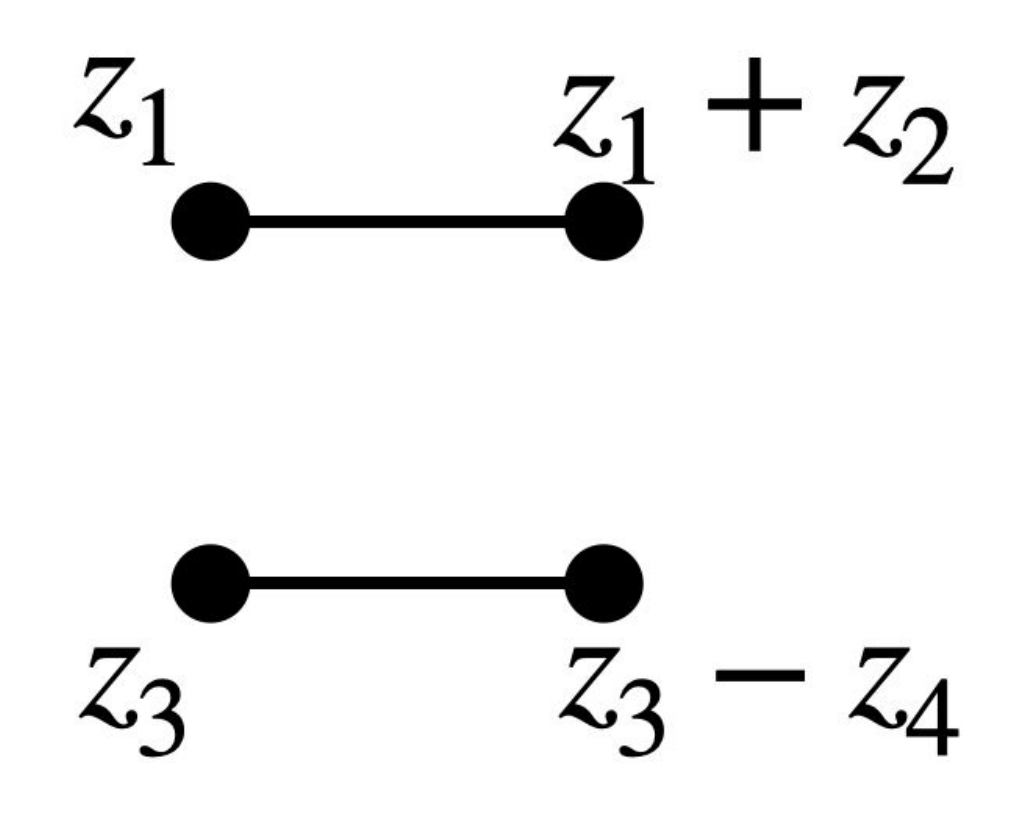}
\caption{Structural graph corresponding to the atomic decomposition of the stellar polynomial \eqref{eq:poly_state1}}
\end{figure}

This structural graph has two connected components, each of which is an atomic polynomial: $p_1(z_1,z_2)=z_1 (z_1+z_2)$ and $p_2(z_3,z_4)=z_3 (z_3-z_4)$. Each of the atomic polynomials creates mode-intrinsic entanglement within the modes it acts on. On the other hand, the two sets of two modes are disentangled with respect to each other. This example was already written in the basis in which its atomic decomposition could be found, so we just present it for the sake of making concepts clear.  \\

\paragraph{} We now consider a similar example, which is not given on its atomic basis. All the modes involved on the example are orthogonal to each other, so the algorithm should find a basis where the state is fully factorized. The state is given by 
\begin{equation}\label{eq:state_2}\begin{split}
    &|\psi\rangle=\\&\frac{1}{\mathcal N}\left(\frac{\hat a_1^\dagger-a_2^\dagger}{\sqrt{2}}\right)\left(\frac{\hat a_1^\dagger+\hat a_2^\dagger}{\sqrt{2}}\right)\left(\frac{\hat a_3^\dagger+\hat a_4^\dagger}{\sqrt{2}}\right)\left(\frac{\hat a_3^\dagger-\hat a_4^\dagger}{\sqrt{2}}\right)|0\rangle.
    \end{split}
\end{equation}
We will not write down explicitly its stellar polynomial, as it does not add any useful information for the discussion. The algorithm correctly finds the structural graph for the state, in Fig.\ref{fig:structural_graph_2}, with four atomic factors.\par
\begin{figure}[htbp]
\centering
\includegraphics[width =0.5\linewidth]{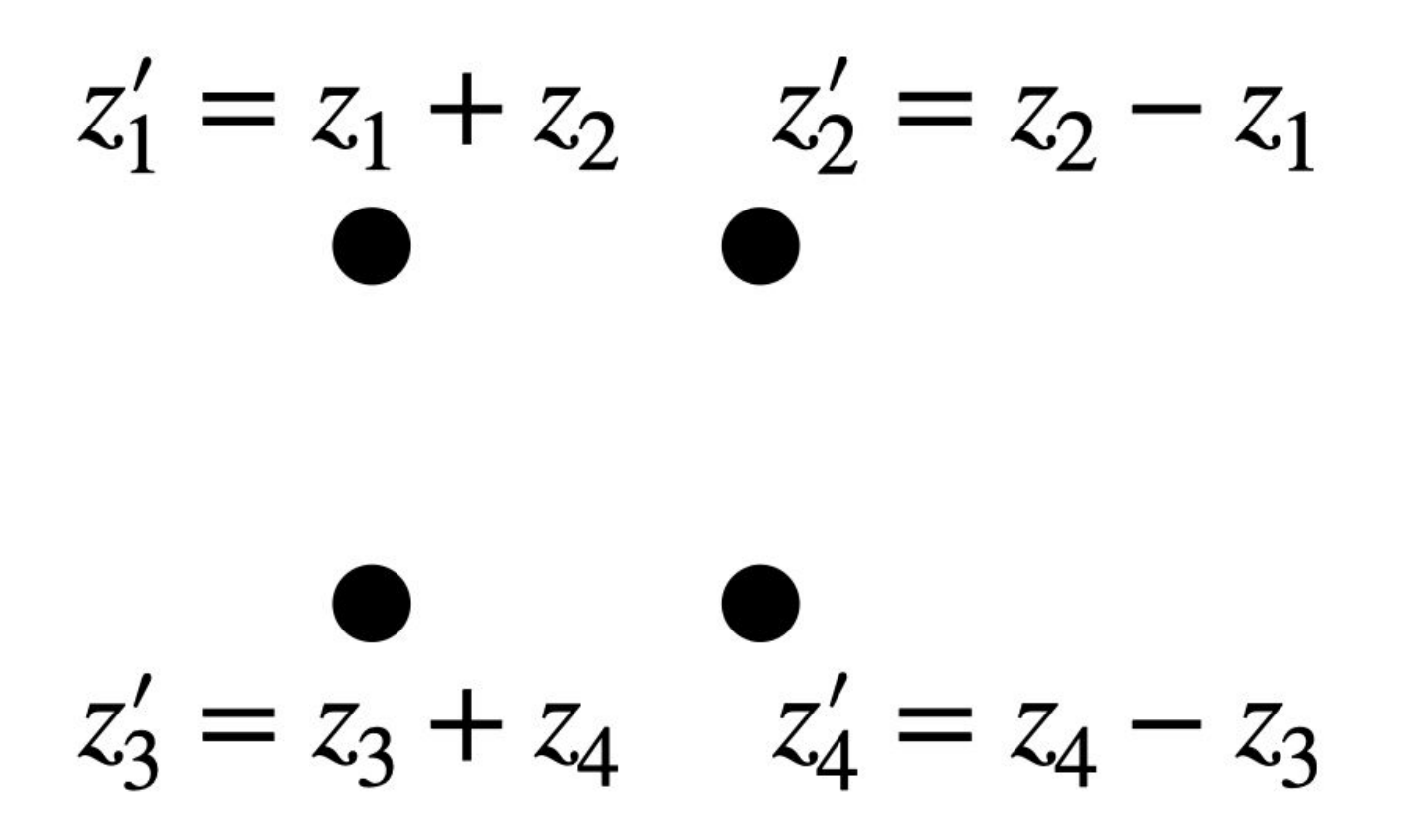}
\caption{Structural graph corresponding to the atomic decomposition of the stellar polynomial of the state \eqref{eq:state_2}.}
\label{fig:structural_graph_2}
\end{figure}
Notice that all four nodes (irreducible factors), are disconnected. This corresponds to a rather trivial atomic factorization $p_{\psi}=\prod_{i=1}^{4}z_i$. This correctly identify that the state, given that all photon additions occur in orthogonal modes, can be fully disentangled into four independent modes.\par 
To further test the algorithm, we analyzed the case in which the state considered above is scrambled over the different modes using an arbitrary interferometer. In appendix \ref{sec:app_C} we can see the details of the computation and a discussion of the main details to keep in mind.  \par

\paragraph{} As mentioned previously in the discussion, another family of not-passively separable core states is provided by those whose stellar polynomial is in itself irreducible, over all its essential variables. We consider now an example in which such a behavior is observed: 
\begin{equation}
    |\psi\rangle=\frac{\hat a_1^2+\hat a_2^2+\hat a_3^2}{\sqrt{3}} \frac{\hat a_3^2-\hat a_4^2}{\sqrt{2}}|0\rangle.
\end{equation}
The corresponding stellar polynomial is given by 
\begin{equation}\label{eq:poly_non_linear_factors}
    P_\psi(\mathbf z)=\frac{\sqrt{2}\left(z_1^2+z_2^2+z_3^2\right)\left(z_3^2-z_4^2\right)}{\sqrt{3}},
\end{equation}
where the first factor is irreducible and the second can be factorized into two linear terms. When we perform the atomic decomposition we obtain that the polynomial has a single connected component, thus a single atomic factor, which in turn means that there is no partition of the four involved modes, for which the state is passive separable.\par 
\begin{figure}[htbp]
\centering
\includegraphics[width =0.9\linewidth]{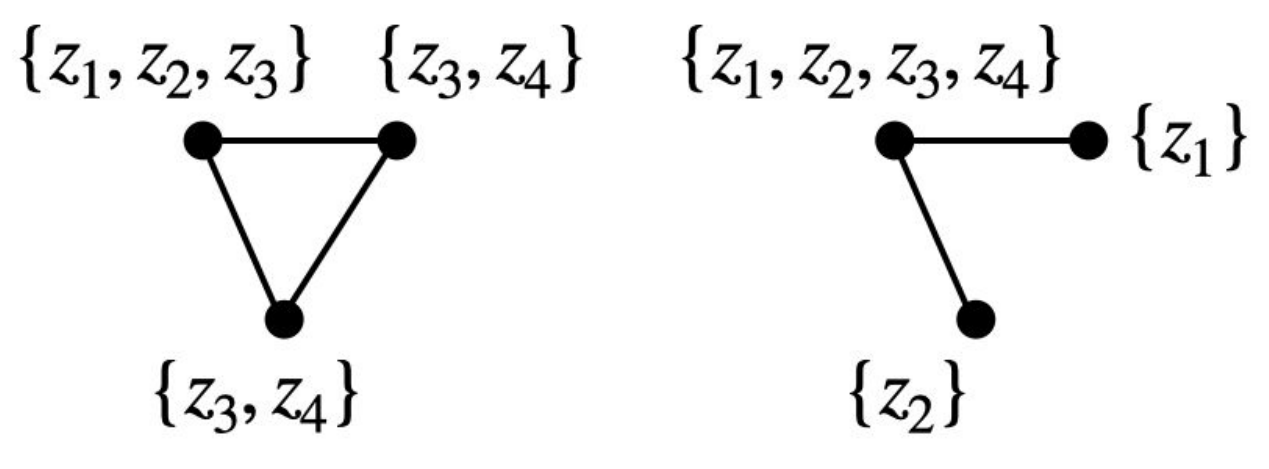}
\caption{Two different structural graphs corresponding to the atomic decomposition of the stellar polynomial in equation \eqref{eq:poly_non_linear_factors}.}
\label{fig:two_struct_graphs}
\end{figure}
The two different graphic representations in Fig.\ref{fig:two_struct_graphs}, of the atomic polynomial show the ambiguity in trying to associate a unique graphic representation to the atomic decomposition. Nevertheless, both representations emphasize the fact that no pair of modes can be put in different partitions. Notably, though the two linear polynomials are orthogonal to each other, they get connected through the irreducible polynomial, that itself involves modes that are non orthogonal to either of the linear terms. This also illustrates that, in many cases, the question of whether two modes are algebraically entangled is ill defined, as part of the entanglement comes indeed from an irreducible polynomial, but also part of it comes from the non-orthogonality between modes involved in different irreducible factors. \par 

In what follows we describe two special cases for which the analysis of the passive separability of core states can be analyzed using different, specialized, procedures. 

\section{Passive separability in special cases}
\label{sec:special_cases}

\subsection{Passive separability of two-mode core states} 
Two-mode core states of an arbitrary stellar rank $r$, provide a relevant and enlightening particular case of the analysis we have considered so far. The structural graph of two-mode core states can either have a single connected component, or two connected components. Passive separable core states are those with two connected components, each of which involves a single mode, or equivalently a single variable stellar polynomial. Univariate polynomials all have a discrete number of zeros, equal to their stellar rank (degree of the polynomial) \cite{chabaud_holomorphic_2022}. When elevated to the full two-mode phase space, these discrete zeros, each extend as a plane. Planes corresponding to different modes will be orthogonal to each other. This orthogonality is preserved by passive operations, and is thus the signature of passive separability. The orthogonality between these planes is equivalent to the disjointness of their corresponding linear factors. In what follows we formalize this connection, to provide a tailored algorithmic procedure for the analysis of passive separability of two-mode core states. A formalization of these ideas is provided in the Appendix \ref{sec:Appendix_A}, leading to the following theorem.\par


 \begin{thm}\label{theorem:orthogonality_is_equivalent_to_passive_separabiility}
    A pure core state $|C\rangle$, of a two-mode bosonic system, is passively separable if and only if the hypersurfaces of zeros of its stellar polynomial $p_\star (\mathbf z)$, $V(p_\star)$ can be separated into two sets of hyperplanes orthogonal to each other, \ie 
    \begin{equation*}
        V(p_\star (\mathbf z))=V_1 \cup V_2, \text{ such that } V_1 \perp V_2.
    \end{equation*}
    \begin{proof}
        The proof of this theorem is a direct application of all the results listed in Appendix \ref{sec:Appendix_A}.
    \end{proof}
 \end{thm}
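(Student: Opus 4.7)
The plan is to reduce the claim, via Proposition \ref{propos:passive_sep_and_poly_factorization} specialized to two modes, to an equivalence between (i) the existence of $\mathbf U \in \mathrm{U}(2)$ making $p_{\star}(\mathbf U \mathbf z) = p_1(\tilde z_1)\, p_2(\tilde z_2)$ for univariate polynomials $p_1, p_2$, and (ii) the geometric decomposition $V(p_{\star}) = V_1 \cup V_2$, where $V_1, V_2$ are finite unions of hyperplanes whose normal directions lie in two mutually orthogonal lines in $\complex^2$. I would prove the two implications separately, using the fundamental theorem of algebra (so that any univariate complex polynomial splits into linear factors) and the fact that $\complex[z_1,z_2]$ is a UFD.

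For the forward direction, if $\vert C \rangle$ is passive separable, Proposition \ref{propos:passive_sep_and_poly_factorization} provides $\mathbf U \in \mathrm{U}(2)$ with rows $\mathbf u_1, \mathbf u_2$ (an orthonormal basis of $\complex^2$) such that $p_{\star}(\mathbf z) = p_1(\mathbf u_1^T \mathbf z)\, p_2(\mathbf u_2^T \mathbf z)$. By the fundamental theorem of algebra, each univariate factor splits as $p_i(w) = c_i \prod_{k}(w - \alpha_k^{(i)})$, so $p_{\star}$ is a product of linear forms $\mathbf u_i^T \mathbf z - \alpha_k^{(i)}$. Each such factor cuts out a hyperplane $H_k^{(i)} \subset \complex^2$ with normal $\mathbf u_i$; setting $V_i = \bigcup_k H_k^{(i)}$ yields the required decomposition, which is orthogonal because $\mathbf u_1 \perp \mathbf u_2$.

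For the reverse direction, assume $V(p_{\star}) = V_1 \cup V_2$ with $V_1 \perp V_2$. By hypothesis each $V_i$ is a finite union of hyperplanes with a common normal direction $\mathbf u_i$, and the two directions satisfy $\mathbf u_1^{\dagger} \mathbf u_2 = 0$. The auxiliary polynomial $q(\mathbf z) = \prod_{i,k}(\mathbf u_i^T \mathbf z - \beta_k^{(i)})$ has exactly $V(p_{\star})$ as its zero set, so by Hilbert's Nullstellensatz over $\complex$ the two polynomials share the same radical; since $\complex[z_1,z_2]$ is a UFD and each $(\mathbf u_i^T \mathbf z - \beta_k^{(i)})$ is irreducible, every irreducible factor of $p_{\star}$ must coincide (up to a scalar) with one of these linear forms. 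Thus $p_{\star}$ itself is a product of linear forms whose normals lie in the two orthogonal directions $\mathbf u_1, \mathbf u_2$. Taking $\mathbf U \in \mathrm{U}(2)$ with rows $\mathbf u_1, \mathbf u_2$ and performing the change of variables $\tilde{\mathbf z} = \mathbf U \mathbf z$ makes every linear factor depend on only one of the new variables, so $p_{\star}(\mathbf U^{-1} \tilde{\mathbf z}) = p_1(\tilde z_1)\, p_2(\tilde z_2)$, and Proposition \ref{propos:passive_sep_and_poly_factorization} delivers passive separability.

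The main obstacle is the passage from a set-theoretic decomposition of the zero locus to a bona fide polynomial factorization that respects both multiplicities and the grouping into two orthogonal directions. Concretely, one has to ensure that no stray irreducible factor of $p_{\star}$ has a normal direction incompatible with $\mathbf u_1$ or $\mathbf u_2$; this is exactly where the hypothesis $V_1 \perp V_2$ is crucial, since in $\complex^2$ two mutually orthogonal one-dimensional subspaces exhaust the available directions. Once this structural point is secured via the Nullstellensatz and unique factorization, the remainder of the argument is a routine application of the fundamental theorem of algebra and the unitarity of the change of variables. This two-mode argument does not extend directly to higher mode numbers precisely because, there, irreducible bivariate (or multivariate) factors need not split into linear forms, and the structural graph machinery of Section \ref{sec:structural_charact_stellar_poly} becomes indispensable.
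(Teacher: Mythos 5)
Your proof is correct, and for the forward implication it follows essentially the same route as the paper's Appendix \ref{sec:Appendix_A}: factor $p_\star$ through Proposition \ref{propos:passive_sep_and_poly_factorization}, split each univariate factor by the fundamental theorem of algebra, and read off two families of parallel complex planes whose direction vectors are orthogonal because the change of variables is unitary. Where you genuinely diverge is the converse. The paper disposes of it by asserting that the chain of appendix lemmas can be read ``in a backward order,'' but those lemmas only relate a \emph{given} factorization to its zero set; they do not by themselves produce a factorization from a set-theoretic decomposition of $V(p_\star)$. You correctly identify this as the real content of the reverse implication and close it with the Nullstellensatz plus unique factorization in $\complex[z_1,z_2]$: the auxiliary product of linear forms has the same radical as $p_\star$, so every irreducible factor of $p_\star$ is one of those linear forms, and grouping by direction yields the separable form. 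This buys a complete argument where the paper's is only a sketch. Two small points you should still make explicit: (i) the statement that each $V_i$ consists of \emph{parallel} hyperplanes is not literally the hypothesis but a consequence of it --- in $\complex^2$ every plane of $V_1$ must be orthogonal to every plane of $V_2$, and the orthogonal complement of a complex line is unique, so each nonempty $V_i$ automatically has a single direction; and (ii) the degenerate case of a one-sided state, where one univariate factor is constant and hence one of the $V_i$ is empty, needs a separate sentence (all planes of the remaining set are then parallel and any orthogonal $\mathbf{u}_2$ completes the basis). Also note that the ``normal'' of $\{\mathbf{u}^T\mathbf{z}=\alpha\}$ with respect to the Hermitian inner product is $\overline{\mathbf{u}}$ rather than $\mathbf{u}$; orthogonality is preserved under complex conjugation, so your conclusion is unaffected, but the bookkeeping is worth stating.
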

The approach described in this section is very similar to those employed in \cite{sperling_mode-independent_2019} and \cite{Migdal_2014}. In both cases they consider homogeneous polynomials of two variables, for which the factorization is always into linear terms. Here we consider also non-homogeneous polynomials, for which there is the possibility of nonlinear irreducible factors, which immediately implies some mode-intrinsic entanglement. This is the case, for example, for maximally entangled states of the form $|\psi\rangle\propto \sum_{n=0}^{N}|n,n\rangle$. \par 
In many cases of interest it is possible to directly obtain, in an efficient way, a parametrization of all the zero surfaces and thus directly check the conditions above. Nevertheless, for general high stellar rank states obtaining such parametrizations may not always be simple. In Appendix \ref{sec:Appendix_A} we outline a simple procedure to check the conditions above, without having direct access to a parametrization of the zero surfaces. In what follows we study some relevant two-mode core states as an example of the applicability of the ideas we have discussed. 
\subsubsection{States with stellar rank 1}
Let us consider the first non-trivial case, which is given by all possible two-mode core states of stellar rank 1. They are given by 
\begin{equation}
    |C_{1}\>=c_0|00\>+c_{1}|01\>+c_{2}|10\>,
\end{equation}
where $c_{i}\in \complex$, and $\sum |c_i|^2=1$. The corresponding stellar polynomial is
\begin{equation}
    p_{|C_{1}\>}(z_1,z_2)=c_0+c_{1}z_2+c_2z_1.
\end{equation}
The zeros of this polynomial are located in the plane 
\begin{equation}
    z_{2}=-\frac{c_{2}}{c_1}z_1-\frac{c_0}{c_1}.
\end{equation}
The fact that there is only one plane of zeros implies that the state is passively separable, as we can bring this plane to be parallel to any of the complex planes $z_1 $ or $z_2$, \ie we can localize the photon into any of the two modes. We can then conclude that any core state of stellar rank one can be passively disentangled, and in consequence any state of stellar rank one is Gaussian separable. The result for core states was already pointed out in \cite{sperling_mode-independent_2019}. A similar result was described in \cite{barral_metrological_2024} for the case of single photon subtracted states, which span all possible states of stellar rank one. 

\subsubsection{NOON states}
We consider now the example of NOON core states
\begin{equation}
    |NOON\>=\frac{|N0\>+|0N\>}{\sqrt{2}}.
\end{equation}
The stellar function is given by 
\begin{equation}
    P_{\text{NOON}}(z_1,z_2)=\frac{1}{\sqrt{2(N!)}}\left(z_{1}^{N}+z_{2}^{N}\right).
\end{equation}
The zeros of this polynomial are given by planes of the form
\begin{equation*}
    z_2= \kappa_j z_1,
\end{equation*}
$\kappa_j=e^{i \pi \frac{2 j+1}{N}}$. For $N=2$, this corresponds to two planes which are  orthogonal to each other, which is consistent with the fact that this state is equivalent to the Hong-Ou-Mandel state (up to a local rotation - required for disentangling the state). On the other hand, for $N\geq 3$, the planes are no longer separable into two orthogonal sets. To see this it suffices to take two consecutive planes, for $N=3$, parametrized $v_1=(z_1,e^{i\pi/3}z_1)$ and $v_2=(z_1,e^{i\pi}z_1)$. The inner product between them is $v_1^*\cdot v_2=|z_1|^2(1+e^{-2i\pi/3})\neq 0$. For this to be zero we require $\kappa_2^* \kappa_1=-1$.\par 
Interestingly, the factorization of the NOON states for $N\geq 3$ into non-orthogonal linear terms, implies that the state can be produced by a sequence of photon additions on non-orthogonal modes. The entanglement is thus not algebraic, as the corresponding atomic polynomial is not irreducible. \par

 In the Appendix \ref{sec:appendix_B} we apply this framework to the analysis of other two-mode states of relevance like two photon subtracted states and maximally entangled core states of arbitrary stellar rank. In what follows we analyze multimode states, for which new techniques are introduced.

\subsection{Passive separability of states of stellar rank 2}
States of stellar rank 2 over an arbitrary number of modes provide a further special case which can be handled using a tailored procedure. The structural graph of these states contains either a single node (a single irreducible polynomial of degree 2), or two nodes (two linear factors). The latter case implies that the state is composed of two photons, populating independent modes. If these modes are orthogonal the state is separable with respect to any bipartition. If they are not orthogonal, the state is still separable with respect to any partition in which some subset includes more than two modes. The case of a single node (a single irreducible polynomial), simplifies to analysing the essential variables of this polynomial with respect to the partition of interest. In what follows we formalize these statements and include a specific method to analyze these states. Importantly, the fact the structural graph contains at most two nodes implying that only bipartite entanglement is relevant.    \par 

Consider an $M$-modes (pure) core state of stellar rank $2$. It will be described by a second order polynomial in $\mathbf{z} \in \complex^{M}$ of the type:
\begin{equation}
\label{e:genpol2}
\mathbf{z}^{T} \mathbf{{A}} 
\mathbf{z} \ + \ \bm{l}^T\mathbf{z} \ + \ c_{0}
\end{equation}
where $\mathbf{A}$ is an $M \times M$ symmetric \emph{complex} matrix, $\bm{l} \in \complex^{M}$ and $c_{0} \in \complex$. If $M \geq 3$ and we are asking if the state can be (passively) factorized into a state of $M_{1}$ modes and a state of $M_{2}$ modes with $M_{1} \leq M_{2}$ and $M = M_{1} + M_{2}$, we would like to bring the polynomial to the form:
\begin{equation}
\label{eq:genfactoriz2}
\left( \bm{x}^{T}  \mathbf{K}_{2} \bm{x}  +  \bm{k}_{1}^T \bm{x} + k_{0} \right) \left( \bm{y}^{T} \mathbf{R}_{2} 
\bm{y} + \bm{r}_{1}^T \bm{y} + r_{0} \right)
\end{equation}
where $\bm{x} \in \complex^{M_{1}}$ and $\bm{y} \in \complex^{M_{2}}$, such that $\mathbf{U} \ ( \bm{x}, \bm{y} ) = \mathbf{z}$.
Since the degree must be preserved, either $\mathbf{K}_{2} =  \mathbf{R}_{2} = 0$ or one of the two factors must be constant. Let us consider the first case first. Changing basis in $\complex^{M_{1}}$ and in $\complex^{M_{2}}$, we can rewrite the factorized polynomial in Eq.(\ref{eq:genfactoriz2}) as:
\begin{equation}
\label{eq:factorlin}
(k_{1} x + k_{0} ) ( r_{1} y + r_{0} ) 
\end{equation}
But now if $M_{2} \geq 2$, which is always the case if $M \geq 3$, we can consider both modes $x$ and $y$ as part of the $M_{2}$ modes in one of the factors. This means that the singular value decomposition of $\mathbf{A}$ (which, thanks to the symmetry of $\mathbf{A}$, is of the form $\mathbf{V}^{T} \mathbf{D} \mathbf{V}$ with $\mathbf{V}$ unitary and $\mathbf{D}$ diagonal and complex) will have only $2$ nonzero singular values or, equivalently, the initial polynomial will involve only 2 essential variables. Therefore, for $M \geq 3$ the case in which only one of the two factors in Eq.(\ref{eq:genfactoriz2}) is nonconstant is the most general one. In that case, at least $M_{1}$ singular values of $\mathbf{A}$ must be zero and, in the singular-value basis of $\mathbf{A}$, the vector $\bm{l}$ must have support on a space of dimension $\leq M_{2}$ whose orthogonal complement is completely contained in $\ker ( \mathbf{A} )$. 
The only leftover case for stellar rank $2$ is $M =2$, since in that case we cannot push both modes in the factorized form of Eq.(\ref{eq:factorlin}) in one of the factors (because each factor is single mode now). In this case, $\mathbf{A}$ can have $2$ nonzero singular values, but they must be equal to each other. If this condition is satisfied, then the core state is factorizable in a $1+1$ modes decomposition if either $\bm{l}$ is not an eigenstate of $\mathbf{A}$ and $c_{0} \neq 0$, or $\bm{l}$ is an eigenstate of $\mathbf{A}$ and $c_{0} = 0$.

\begin{thm}
Let the second-order polynomial:
\begin{equation}
\mathbf{z}^{T} \mathbf{{A}} 
  \mathbf{z} \ + \ \bm{l}^T \mathbf{z} \ + \ c_{0}
\end{equation}
describe an $M$-modes core state of stellar rank $2$. Then the state is passively factorizable as $\vert \psi_{1} \rangle \otimes \vert \psi_{2} \rangle$, where $\vert \psi_{1} \rangle$ describes $M_{1}$ modes and $\vert \psi_{2} \rangle$ describes $M_{2}$ modes such that $M_{1} \leq M_{2}$ and $M = M_{1} + M_{2}$ if and only if at least one of the following conditions holds:

\begin{itemize}
\item If $M_{2} > 1$, $\mathbf{A}$ must have rank $\leq M_{2}$ and, in any basis where the nonzero singular values of $\mathbf{A}$ are put first, $\bm{l}$ has at most the first $M_{2}$ entries different from zero. 
Equivalently, if the polynomial only involves $\leq M_{2}$ essential variables. 
\item If $M_{2} = 1$, then $M_{1} = M_{2} = 1$ and $M=2$. In this case, either the second-order polynomial in 2 variables must factorize as a product of two linear factors, each involving a different essential variable, or the whole polynomial must involve a single essential variable. 
\end{itemize}
\end{thm}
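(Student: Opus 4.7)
The plan is to leverage the essential-variable machinery from the previous section, since the gradient of a generic degree-two polynomial $p(\mathbf z) = \mathbf z^T \mathbf A \mathbf z + \bm{l}^T \mathbf z + c_0$ is $\vec\nabla p(\mathbf z) = 2 \mathbf A \mathbf z + \bm{l}$, so that $\mathbb{E}(p) = \mathrm{range}(\mathbf A) + \mathrm{span}(\bm{l})$. This immediately recasts the number of essential variables as rank-plus-correction data of $(\mathbf A, \bm{l})$, which is exactly what the theorem's algebraic conditions encode. With that identification in hand, I would split the proof by the value of $M_2$, since the structure of the allowed factorizations changes qualitatively.

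For the case $M_2 > 1$, I would run a short case analysis on the possible degree distributions of the two factors in Eq.~(\ref{eq:genfactoriz2}). Compatibility with total degree $2$ leaves only the splits $(0,2)$, $(1,1)$, and $(2,0)$. The $(1,1)$ split describes a product of two mutually disjoint linear forms, whose combined essential space has dimension at most two; since $M_2 \geq 2$, both of these linear forms can be absorbed into the $M_2$-mode factor, collapsing this case into the canonical $(0,2)$ situation. The remaining case is therefore one in which the entire polynomial lives on an $M_2$-dimensional mode subspace, which is equivalent to $\dim \mathbb{E}(p) \leq M_2$ and hence, by the gradient computation, to $\mathrm{rank}(\mathbf A) \leq M_2$ together with $\bm{l}$ being supported, in the singular-value basis that lists the nonzero singular vectors of $\mathbf A$ first, on at most the first $M_2$ entries. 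Conversely, Proposition~\ref{propos:passive_sep_and_poly_factorization} together with the covariance of $\mathbb{E}(p)$ guarantees that if these conditions hold, a suitable unitary change of variables realizes the factorization.

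For the remaining case $M_2 = 1$ (so that $M_1 = M_2 = 1$ and $M = 2$), the absorption argument above no longer applies, and the factorization must take the exact form $(k_1 x + k_0)(r_1 y + r_0)$ with $x, y$ a pair of mutually orthogonal variables. I would then invoke that, over $\complex[z_1, z_2]$, any degree-two polynomial is either irreducible or decomposes as a product of two (possibly constant) linear factors. In the irreducible case no such factorization exists. In the reducible case, if the two linear factors point in parallel directions the polynomial involves a single essential variable and the state factorizes trivially with one mode in an effective vacuum; otherwise, passive separability holds exactly when the two directions are orthogonal, which coincides with the condition that the two linear factors involve distinct essential variables. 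This matches the theorem's statement.

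The main obstacle I anticipate is the absorption argument for $(1,1)$-type factorizations when $M_2 \geq 2$: one must show, via Lemma~\ref{lemma:groupingfactors} or a direct unitary rearrangement, that the two originally disjoint linear factors can always be repositioned into the $M_2$-mode side while preserving the whole polynomial. A secondary bookkeeping issue is articulating the condition on $\bm{l}$ in a basis-clean way, since $\ker(\mathbf A)$ admits many orthonormal bases; the invariant statement behind the theorem's wording is that the projection of $\bm{l}$ onto $\ker(\mathbf A)$ must lie in a subspace of dimension at most $M_2 - \mathrm{rank}(\mathbf A)$, which is then repackaged as a support condition in a specific class of singular-value bases.
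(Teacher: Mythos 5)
Your proposal is correct and follows essentially the same route as the paper: a case split on the degree distribution of the two factors, absorption of the product-of-two-linear-forms case into the $M_2\geq 2$ block so that everything reduces to the condition $\dim\mathbb{E}(p)\leq M_2$ (equivalently the rank/support condition on $\mathbf{A}$ and $\bm{l}$), and a separate treatment of $M=2$ via reducibility into orthogonal linear factors. Your explicit computation $\vec{\nabla}p=2\mathbf{A}\mathbf{z}+\bm{l}$, hence $\mathbb{E}(p)=\mathrm{range}(\mathbf{A})+\mathrm{span}(\bm{l})$, is left implicit in the paper but is exactly the identification it relies on.
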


\section{Conclusions and outlook}
In this work we have put together and analyzed the different ways in which entanglement can appear in non-Gaussian quantum states. Intrinsic forms of non-Gaussian entanglement include mode-intrinsic entanglement and genuine non-Gaussian entanglement. Based on the stellar representation, that associates a multivariate polynomial on phase space to every pure core state of a bosonic system, we have developed methods to analyze the entanglement properties of those states based on the analysis of algebraic geometric properties of their associated polynomials. Notably we propose a method to build up an \textit{atomic decomposition} of the stellar polynomial of a quantum state, that fully reflects the inherent entanglement structure of the state of interest.  
\par 
Moreover, the atomic decomposition of the polynomial provides key information about the state preparation tools required to produce the state of interest, narrowing the application of non-Gaussian processes as much as possible. This decomposition also helps us identify two  key ways for generating mode-intrinsic entanglement. The one of these ways with the most immediate applicability is interleaving single-photon additions on non-orthogonal modes. On the other hand, the implementation of irreducible polynomials of creation operators over several variables has a similar effect, producing a form of mode-intrinsic entanglement that we chose to refer to as \textit{algebraic} entanglement. Further analysis of the interest of one procedure or another is left for future works. \par
This work, along with other recent publications \cite{Kopylov_2025,andrei2025}, builds on the connection between algebraic geometry tools and quantum information. We believe that this may open a fruitful cross-fertilization between the two domains, that may bear unexpected results. 
\par
The extension of the analysis of this kind of entanglement to mixed states is a challenging task. The mere definition of passive or Gaussian separability is a potential subject of debate. There are two options: we could consider a state to be passive (Gaussian) separable if it can be decomposed into a mixture of passive (Gaussian) separable states, \ie 
\begin{equation*}\label{eq:definition_passive(Gaussian)_separability_convex}
    \hat \rho=\sum_{\gamma} p(\gamma) \hat U_{\gamma} \hat \rho_{I}^{(\gamma)}\otimes \rho_{J}^{(\gamma)}\hat U^{\dagger}_{\gamma};
\end{equation*}
or we could consider it to be passive (Gaussian) separable, if there is a single operation $\hat U$, that brings the state to a separable form, \ie 
\begin{equation*}
    \hat U \hat{\rho}\hat U^{\dagger}=\sum_{\gamma} p(\gamma) \hat \rho_I^{(\gamma)} \otimes \hat \rho_J^{(\gamma)}.
\end{equation*}
The choice of one definition or another might be dictated by the application of interest. In \cite{lopetegui_detection_2024} a method is proposed to detect mode-intrinsic entanglement that probes the latter case. On the other hand in \cite{chabaud_resources_2023} it is argued that passive separability renders a certain class of sampling problems efficiently simulatable with classical resources. This argument is independent of which definition of passive separability we use, as long as we know the decomposition. An analysis of this question, and of other methods to investigate the passive (Gaussian) separability of mixed non-Gaussian states is left for future research. 

\begin{acknowledgments}
 We thank Andrei Aralov, Cisco Gooding, Mathieu Isoard, and Ulysse Chabaud for fruitful discussions. This work was financially supported by the ANR JCJC project NoRdiC (ANR-21-CE47-0005), the Plan France 2030 through the project OQuLus (ANR-22-PETQ-0013), and the HORIZON-EIC-2022- PATHFINDERCHALLENGES-01 programme under Grant Agreement Number 101114899 (Veriqub).
\end{acknowledgments}

\bibliography{biblio.bib}

\appendix
\section{The particular case of two modes}\label{sec:Appendix_A}
Any two-mode, passive separable core state can be written as 
\begin{equation}
    |C\rangle= \hat{U}_{\mathbf{O}} |\psi_1\rangle \otimes |\psi_2\rangle.
\end{equation}
Following proposition \ref{propos:passive_sep_and_poly_factorization}, this implies that 
\begin{equation}\label{eq:two-mode-passive_separable_polynomials}
   p_{C}( \mathbf{U}^T \mathbf{z}) \ = \ p_{1}\left(z_1\right) p_{2}\left(z_2\right),
\end{equation}
where $z_1,z_2\in \complex$. This implies that the algebraic properties of $p_C$ are directly inherited from $p_{1(2)}$. We analyze the surfaces of zeros of this polynomial, \ie the set of solutions for 
\begin{equation}\label{eq:zero-eq-stellar-poly-two-modes}
    p_C(\mathbf z)=0,
\end{equation}
 which in the algebraic geometry jargon define an algebraic variety $V(p_C)$. Based on equation ~\eqref{eq:two-mode-passive_separable_polynomials} we can show the following result 
 \begin{lema}
     \textbf{Zeros of a \textit{passive separable} stellar polynomial} The set of zeros of a stellar polynomial $p_C$, that corresponds to a passive separable state, is given by the union of the zeros of the stellar polynomials $p_1$, and $p_2$, \ie
     \begin{equation*}
         V(p_C(\mathbf z))=V(p_1((\mathbf{U} \mathbf z)_1))\cup V( p_2((\mathbf{U} \mathbf z)_2)).
     \end{equation*}
     \begin{proof}
         From equation ~\eqref{eq:two-mode-passive_separable_polynomials} we can deduce that, in any given specific basis 
         \begin{equation*}
             p_C(\mathbf z)=p_1((O^T \mathbf z)_1) p_2((O^T \mathbf z)_2),
         \end{equation*}
         and thus, $p_C(\mathbf z)=0$ for values of $z$ such that $p_1((O^T \mathbf z)_1)=0$ or $p_2((O^T \mathbf z)_2)=0$. This implies that $V(p_C(\mathbf z))=\{\mathbf z| p_C(\mathbf z)=0\}=\{\mathbf z| p_1((O^T \mathbf z)_1)=0 \text{ or } p_2((O^T \mathbf z)_2)=0 \}=\{\mathbf z| p_1((O^T \mathbf z)_1)=0\} \cup \{\mathbf z|p_2((O^T \mathbf z)_2)=0 \}=V(P_1((O^T \mathbf z)_1)) \cup V(P_2((O^T \mathbf z)_2))$
     \end{proof}
 \end{lema}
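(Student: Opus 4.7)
The plan is to derive the stated set equality as a direct consequence of the product form of $p_C$ guaranteed by passive separability, combined with the absence of zero divisors in $\mathbb{C}$. No tools beyond Proposition \ref{propos:passive_sep_and_poly_factorization} and elementary field theory are required.

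First, I would invoke the hypothesis that the core state $|C\rangle$ is passive separable together with Proposition \ref{propos:passive_sep_and_poly_factorization} to secure a unitary $\mathbf{U}$ for which $p_C(\mathbf{z}) = p_1((\mathbf{U}\mathbf{z})_1)\, p_2((\mathbf{U}\mathbf{z})_2)$, which is exactly Eq. \eqref{eq:two-mode-passive_separable_polynomials}. From this point on, the argument becomes purely algebraic in this factorized expression, and the quantum-mechanical content of the hypothesis is no longer needed.

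Next, I would exploit the fact that $\mathbb{C}$ is an integral domain: for any two complex numbers $a,b$, the product $ab$ vanishes if and only if $a=0$ or $b=0$. Applying this pointwise to the factorized form of $p_C$, for every $\mathbf{z}\in\mathbb{C}^2$ we obtain $p_C(\mathbf{z}) = 0$ iff $p_1((\mathbf{U}\mathbf{z})_1) = 0$ or $p_2((\mathbf{U}\mathbf{z})_2) = 0$. Converting this pointwise disjunction into set-theoretic language immediately yields $V(p_C) = \{\mathbf{z}\in\mathbb{C}^2 : p_1((\mathbf{U}\mathbf{z})_1) = 0\} \cup \{\mathbf{z}\in\mathbb{C}^2 : p_2((\mathbf{U}\mathbf{z})_2) = 0\}$, which is precisely the claimed identity $V(p_C(\mathbf{z})) = V(p_1((\mathbf{U}\mathbf{z})_1)) \cup V(p_2((\mathbf{U}\mathbf{z})_2))$.

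There is essentially no substantial obstacle here; the entire statement reduces to the integrality of $\mathbb{C}$ once the factorization is in hand. The only subtlety worth flagging is that the univariate polynomials $p_1$ and $p_2$ have only finitely many zeros in $\mathbb{C}$, but after being pulled back through the linear embeddings $\mathbf{z}\mapsto (\mathbf{U}\mathbf{z})_i$ each isolated zero becomes a complex hyperplane in $\mathbb{C}^2$; the identity therefore asserts the equality of $V(p_C)$ with a union of such hyperplanes, which is consistent with the geometric picture subsequently invoked in Theorem \ref{theorem:orthogonality_is_equivalent_to_passive_separabiility}.
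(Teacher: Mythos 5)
Your proof is correct and follows essentially the same route as the paper's: obtain the factorized form $p_C(\mathbf z)=p_1((\mathbf{U}\mathbf z)_1)\,p_2((\mathbf{U}\mathbf z)_2)$ from the passive-separability hypothesis and then translate the pointwise vanishing of a product into a union of zero sets. Your explicit appeal to $\complex$ having no zero divisors merely makes precise the step the paper leaves implicit.
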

 A further relevant stage in this analysis is to realize that the varieties of the local polynomials $p_1$ and $p_2$ are \textit{orthogonal} to each other. To prove this we need the following result.
 \begin{lema}\label{lema:transformation_varieties}
 \textbf{Action of passive transformations on varieties:} Let $V(p(\mathbf z))$ be the variety associated to a polynomial $p(\mathbf z)$. Then the variety associated to the polynomial after a mode basis transformation $V(p(O^T \mathbf z))$ is given by 
 \begin{equation*}
     V(p(O^T \mathbf z))=O V(p(\mathbf z)).
 \end{equation*}
\begin{proof}
    The proof is a natural consequence of $V(\mathcal{P}(\mathbf z)=\{\mathbf z| p(\mathbf z)=0\}$ being just a collection of hypersurfaces on phase space, which will those transform in the same way as any vector on phase space. $V(\mathcal{P}(O^T\mathbf z)=\{\mathbf z| p(O^T\mathbf z)=0\}=\{O \mathbf z| p(\mathbf z)=0\}=O V(p(\mathbf z))$.
\end{proof}
 \end{lema}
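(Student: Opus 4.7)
The plan is to establish the set equality $V(p(O^T \mathbf z)) = O\, V(p(\mathbf z))$ directly from the definition $V(q) := \{ \mathbf z \in \complex^M : q(\mathbf z)=0 \}$, using only the invertibility of $O$ (as an element of $\mathcal K(M)$, or equivalently of $\mathrm{U}(M)$ after identifying $\reals^{2M}$ with $\complex^M$).

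First I would unpack the left-hand side. By definition,
\begin{equation*}
V(p(O^T \mathbf z)) \ = \ \{ \mathbf z \in \complex^M : p(O^T \mathbf z) = 0\}.
\end{equation*}
Since $O$ is orthogonal, the map $\mathbf z \mapsto O^T \mathbf z$ is a bijection on $\complex^M$, so I can introduce the change of variable $\mathbf w = O^T \mathbf z$, equivalently $\mathbf z = O \mathbf w$, and rewrite the defining condition as $p(\mathbf w)=0$, \ie $\mathbf w \in V(p)$.

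Then I would verify the two inclusions explicitly to make the set equality airtight. For $\subseteq$: if $\mathbf z \in V(p(O^T \mathbf z))$, set $\mathbf w := O^T \mathbf z$; then $p(\mathbf w)=0$ so $\mathbf w \in V(p)$, and $\mathbf z = O \mathbf w \in O\, V(p)$. For $\supseteq$: if $\mathbf z \in O\, V(p)$, write $\mathbf z = O \mathbf w$ with $\mathbf w \in V(p)$; then $O^T \mathbf z = \mathbf w$ and $p(O^T \mathbf z) = p(\mathbf w) = 0$, so $\mathbf z \in V(p(O^T \mathbf z))$. Both inclusions give the claimed identity.

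There is no real obstacle here: the only delicate point is bookkeeping the two distinct viewpoints on $O$ — as a real symplectic-orthogonal transformation on phase space $\reals^{2M}$ and as the associated unitary $\mathbf U \in \mathrm{U}(M)$ acting on the complex coordinates $\mathbf z \in \complex^M$ — and making sure the change of variable is applied on the correct side consistently with the conventions fixed around Eq.~\eqref{eq:LO_transform_core_states}. Once that identification is fixed, the lemma reduces to the tautology that pulling back the zero set of $p$ by an invertible linear map is the same as pushing forward the zero set by its inverse.
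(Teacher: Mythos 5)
Your proposal is correct and follows essentially the same route as the paper's proof: both rewrite $V(p(O^{T}\mathbf z))=\{\mathbf z \mid p(O^{T}\mathbf z)=0\}$ via the change of variable $\mathbf w = O^{T}\mathbf z$ to obtain $\{O\mathbf w \mid p(\mathbf w)=0\}=O\,V(p)$. Your version merely makes the two set inclusions explicit, which is a harmless elaboration of the same one-line argument.
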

 Notice that each variety will be typically composed of several parametrized surfaces; thus, we are being a bit sloppy in notation when we represent the action of $O\in SO(2M)$ on $V$, and what we imply is the action of $O$ on each of the parametrized surfaces that compose $V$. \par
 The next result arrives as an immediate consequence of the proof above. 
 \begin{thm}\label{theorem:orthogonality_passive separable_varieties}
     $V(p_1((O^T \mathbf z)_1))\perp V(p_2((O^T \mathbf z)_2))$ and thus, the set of surfaces that compose $V(p( \mathbf z))$ can be divided into two sets of mutually orthogonal surfaces.
     \begin{proof}
         From lemma \ref{lema:transformation_varieties},  $V(p_i((O^T \mathbf z)_i))=O^TV(p_i(\mathbf z_i))$, where $V(p_1(\mathbf z_1))=\bigcup_{k=1}^{r_1}(\alpha_1^{(k)},z_2)$, and  $V(p_2(\mathbf z_2))=\bigcup_{k=1}^{r_2}(z_1,\alpha_2^{(k)})$. Here $r_i=\text{deg}(\mathcal{P}_i)$, and $\alpha_i^{(k)}$ are complex constants corresponding to each zero of the single variable complex polynomials. As a consequence the hypersurfaces of zeros of $p_1$ are parallel to the complex plane $z_2\in \complex$,  and the ones corresponding to $p_2$ are parallel to the complex plane $z_2\in \complex$, and are thus orthogonal to each other. This orthogonality is preserved under the action of a passive transformation. Just for the matter of illustration, that may be useful for latter analysis, we build up a specific example. Consider the transformation that we apply is just a real beam splitter with angle $\theta$, so that 
         \begin{equation*}
             \begin{split}
                 &z_1^{'}=\cos \theta z_1 + \sin \theta z_2 \\
                 &z_2^{'}=\cos \theta z_2 - \sin \theta z_1, 
             \end{split}
         \end{equation*}
        then a hyperplane parallel to the $z_2$ complex planes (a zero of $p_1$, and a hyperplane parallel to $z_1$, will transform into 
        \begin{equation}\label{eq:perpendicular_planes}
            \begin{split}
                & (\alpha_1,z_2) \rightarrow (\kappa z_2^{'}+C_1,z_2^{'}) \\
                & (z_1,\alpha_2) \rightarrow (z_1^{'},-\kappa z_1^{'}+C_2),
            \end{split}
        \end{equation}
         where $\kappa=\tan \theta$, and $C_i=\frac{\alpha_i}{\cos \theta}$. Thus, any pair of vectors $\mathbf v_1=(\kappa t_1,t_1)$, $\mathbf v_2=(t_2,-\kappa t_2)$, with $t_i\in\mathbb R$, along each of these planes are orthogonal to each other, \ie $\mathbf v_2 \cdot \mathbf v_1=0$.
         
     \end{proof}
 \end{thm}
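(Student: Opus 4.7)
The plan is to exploit the fact that the total degree of the stellar polynomial is $2$, so any factorization of $p(\mathbf{z}) = \mathbf{z}^T \mathbf{A} \mathbf{z} + \bm{l}^T \mathbf{z} + c_0$ into factors supported on two disjoint groups of modes must split the degree as $2 = 0 + 2$ (one constant factor) or $2 = 1 + 1$ (both factors linear). I would analyze these two cases separately and then recast everything in the language of the essential-variables framework developed in Section~\ref{sec:structural_charact_stellar_poly}, using the key observation that the gradient of $p$ is the affine map $\vec{\nabla} p(\mathbf{z}) = 2 \mathbf{A} \mathbf{z} + \bm{l}$, so that $\mathbb{E}(p) = \mathrm{range}(\mathbf{A}) + \mathrm{span}\{\bm{l}\}$ and hence $\dim \mathbb{E}(p)$ equals $\mathrm{rank}(\mathbf{A})$ if $\bm{l} \in \mathrm{range}(\mathbf{A})$ and $\mathrm{rank}(\mathbf{A}) + 1$ otherwise.

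For the "if" direction, I would invoke the covariance of the essential-variable space: whenever $\dim \mathbb{E}(p) \leq M_2$, one can pick an orthonormal basis of $\complex^M$ whose first $M_2$ vectors span a subspace containing $\mathbb{E}(p)$ and whose last $M_1$ vectors lie in $\mathbb{E}^{\perp}(p)$. In this basis $p$ depends only on the first $M_2$ variables, so it factorizes trivially as $p(\bm{y}) \cdot 1$ with $\bm{y} \in \complex^{M_2}$, and the associated core state factorizes accordingly. A short algebraic check then shows that the rank-and-support condition stated in the theorem is exactly equivalent to $\dim \mathbb{E}(p) \leq M_2$: placing the nonzero singular values of $\mathbf{A}$ first gives $\mathrm{range}(\mathbf{A}) \subseteq \mathrm{span}(e_1,\dots,e_r)$, and the extra budget of up to $M_2 - r$ nonzero entries in the first $M_2$ coordinates of $\bm{l}$ exactly covers the possible component of $\bm{l}$ outside $\mathrm{range}(\mathbf{A})$.

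For the "only if" direction with $M_2 > 1$, suppose $p$ admits a factorization as in Eq.~\eqref{eq:genfactoriz2}. If both factors are linear in their respective variables, then each has a unique essential variable, giving two essential variables overall; both can be placed inside the $M_2$-mode group since $M_2 \geq 2$, so $\dim \mathbb{E}(p) \leq 2 \leq M_2$. If instead one factor is constant, then the nonconstant quadratic factor lives entirely inside the $M_2$-mode group (consistent with the convention $M_1 \leq M_2$), so the polynomial depends on at most $M_2$ variables in the factorizing basis, forcing $\dim \mathbb{E}(p) \leq M_2$ by the minimality clause of the essential-variables theorem. These exhaust the ways the degree can split, closing the equivalence.

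The main obstacle, and the reason the statement must be broken into cases, is the degenerate regime $M_2 = 1$, hence $M_1 = M_2 = 1$ and $M = 2$. Here neither factor in the hypothetical decomposition can absorb two essential variables, so the clean essential-variables count fails and one must genuinely check whether the quadratic polynomial in two variables splits as a product of two linear forms in distinct essential variables. This is a discriminant-type condition involving the two eigenvalues of $\mathbf{A}$ (which, by the Takagi-type decomposition used in the paragraph preceding the theorem, must coincide for a nontrivial rank-$2$ factorization), the alignment of $\bm{l}$ with the eigenspaces of $\mathbf{A}$, and the value of $c_0$. The sub-case where the whole polynomial already involves only one essential variable (so that one of the two single-mode factors is forced to be a constant) must be separated out explicitly. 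This $2$-mode quadratic factorization check is the step I expect to require the most care and is the only place where algebraic conditions beyond a pure essential-variables count are unavoidable.
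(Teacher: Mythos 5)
Your proposal does not address the statement it was supposed to prove. The theorem in question (Theorem~\ref{theorem:orthogonality_passive separable_varieties} in Appendix~\ref{sec:Appendix_A}) concerns \emph{two-mode} core states of \emph{arbitrary} stellar rank: it asserts that, for a passive-separable state with $p_C(\mathbf z) = p_1((\mathbf{U}\mathbf z)_1)\,p_2((\mathbf{U}\mathbf z)_2)$, the zero varieties $V(p_1)$ and $V(p_2)$ are mutually orthogonal families of hyperplanes in $\complex^2$. The paper's proof proceeds by noting that each univariate factor $p_i$ has a discrete set of roots $\alpha_i^{(k)}$, so $V(p_1(\mathbf z_1))=\bigcup_k(\alpha_1^{(k)},z_2)$ consists of planes parallel to the $z_2$-axis while $V(p_2(\mathbf z_2))=\bigcup_k(z_1,\alpha_2^{(k)})$ consists of planes parallel to the $z_1$-axis; these two families are orthogonal, and orthogonality is preserved by the passive (unitary/orthogonal) change of variables, by the covariance result of Lemma~\ref{lema:transformation_varieties}.

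What you have written instead is a proof sketch for the entirely different theorem at the end of Section~\ref{sec:special_cases} on stellar-rank-$2$ states over $M$ modes (the one phrased in terms of $\mathbf{z}^T\mathbf{A}\mathbf{z}+\bm{l}^T\mathbf{z}+c_0$, the rank of $\mathbf{A}$, and the support of $\bm{l}$). None of the objects relevant to the target statement --- the zero varieties, their decomposition into hyperplanes, or the notion of orthogonality between surfaces in phase space --- appear in your argument, and conversely the target statement places no restriction on the stellar rank, so a degree-$2$ case analysis cannot establish it. You need to restart from the factorized form of the stellar polynomial, identify the zero set of each univariate factor as a union of parallel hyperplanes, observe that the two families are orthogonal in the factorizing basis, and then transport this orthogonality back through the passive transformation using the covariance of varieties under $\mathrm{U}(2)$.
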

 The final consequence of the previous chain of statements is shown below. 
 \begin{thm}\label{theorem:orthogonality_is_equivalent_to_passive_separabiility}
    A pure core state $|C\rangle$, of a two-mode bosonic system, is passively separable if and only if the hypersurfaces of zeros of its stellar polynomial $p_C(\mathbf z)$ can be separated into two sets of hyperplanes orthogonal to each other, \ie 
    \begin{equation*}
        V(p_C(\mathbf z))=V_1 \cup V_2, \text{ such that } V_1 \perp V_2.
    \end{equation*}
    \begin{proof}
        The proof of this theorem is a direct application of all the results listed above, in a backward order.
    \end{proof}
 \end{thm}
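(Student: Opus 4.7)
The plan is to split the biconditional and lean on the lemmatic infrastructure already built in Appendix \ref{sec:Appendix_A}. The forward direction (passive separable $\Rightarrow$ orthogonal splitting of $V(p_\star)$) is essentially assembled there: Proposition \ref{propos:passive_sep_and_poly_factorization} supplies a unitary $\mathbf{U}$ such that $p_\star(\mathbf{U}\mathbf{z}) = p_1(z_1)\, p_2(z_2)$; the zero-set lemma of the appendix then identifies $V(p_\star)$ with the union of the varieties of the two univariate factors; and Theorem \ref{theorem:orthogonality_passive separable_varieties} verifies that these two families of hyperplanes are mutually orthogonal, being the images under the passive rotation of hyperplanes parallel to the two complex coordinate axes. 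For this direction I would simply chain these three results.

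The substantive work is the converse. Assume $V(p_\star) = V_1 \cup V_2$ with $V_1 \perp V_2$. I would first argue that such a decomposition forces every irreducible factor of $p_\star$ to be linear: any nonlinear irreducible factor in two variables would contribute a zero locus that is not a union of affine hyperplanes but a higher-degree algebraic curve in $\complex^2$, contradicting the assumed hyperplanar structure of the union $V_1 \cup V_2$. One may therefore write $p_\star(\mathbf{z}) = c \prod_k \ell_k(\mathbf{z})^{m_k}$ for pairwise non-proportional linear forms $\ell_k$ with normals $\mathbf{n}_k \in \complex^2$, and the hypothesis $V_1 \perp V_2$ partitions these linear factors into two groups whose normals all lie along one of two mutually orthogonal complex lines in $\complex^2$.

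Choosing $\mathbf{U} \in \mathrm{U}(2)$ whose columns are unit vectors along these two orthogonal directions, the change of variables $\mathbf{z} \mapsto \mathbf{U}\mathbf{z}$ transforms every linear factor in the first group into a univariate polynomial in $z_1$ alone and every factor in the second group into a univariate polynomial in $z_2$ alone, so that $p_\star(\mathbf{U}\mathbf{z}) = p_1(z_1)\, p_2(z_2)$; Proposition \ref{propos:passive_sep_and_poly_factorization} then delivers passive separability. The main obstacle I anticipate is the very first step of the converse, namely ruling out non-hyperplanar components of $V(p_\star)$ while simultaneously controlling multiplicities: the theorem statement tacitly requires the entire zero set to decompose into hyperplanes, and one must check that under this assumption the full polynomial $p_\star$, not merely its radical, admits the claimed linear factorization. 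This is resolved by invoking unique factorization in $\complex[z_1,z_2]$ together with a degree count matching the total multiplicity of the hyperplanes appearing in $V_1$ and $V_2$, but it is the one point in the argument requiring genuine care, since the bare zero set does not determine a polynomial modulo powers.
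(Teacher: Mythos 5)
Your proof is correct and, for the forward direction, follows exactly the paper's route: chain Proposition~\ref{propos:passive_sep_and_poly_factorization} with the two appendix lemmas on zero sets and their covariance under passive transformations. Where you genuinely add value is the converse, which the paper dispatches with the phrase ``in a backward order'' without supplying the argument: your steps --- (i) an irreducible factor of degree $\geq 2$ has an irreducible zero curve that cannot coincide with any affine line, so the hyperplanar hypothesis plus the Nullstellensatz forces every irreducible factor to be a linear form; (ii) in $\complex^2$ orthogonality to a nonempty family pins each group of normals to one of two mutually orthogonal complex lines; (iii) the unitary built from those two directions turns the product into $p_1(z_1)p_2(z_2)$ --- are precisely what is needed and are not written anywhere in the paper. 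Two small remarks. First, the multiplicity issue you flag as the delicate point is in fact harmless: once unique factorization tells you every irreducible factor is linear, the multiplicities ride along for free, since $\ell_k^{m_k}$ becomes $(a z_i + b)^{m_k}$ after the rotation; there is no need to pass through the radical at all. Second, there is a degenerate case that both you and the theorem statement gloss over: if one of the two groups (say $V_2$) is empty --- which happens exactly when one tensor factor of the core state is a constant-times-vacuum --- the orthogonality condition is vacuous and no longer forces the normals in $V_1$ to be parallel, so the literal hypothesis is satisfied by non-separable states such as $p = z_1(z_1+z_2)$ with $V_2=\emptyset$; the statement should additionally require each $V_i$ to consist of mutually parallel hyperplanes (equivalently, that each group shares a single normal direction). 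This is an imprecision inherited from the theorem as stated rather than a flaw introduced by your argument.
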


Consider the core state $|C_\psi\rangle$, whose stellar polynomial is $p_{C_\psi}$. We know that if the state is passively separable, its zero surfaces are all planes, and they are orthogonal to each other (or parallel). We can thus test the separability using the following procedure. The equation of interest is 
\begin{equation}
    p_{|C_{\psi}\>}(z_1,z_2)=0,
\end{equation}
We can substitute one of the parametrizations in ~\eqref{eq:perpendicular_planes} so that we get the condition that there exist $\kappa,C \in \complex$, so that 
\begin{equation}\label{eq:ansatz}
    p_{|C_{\psi}\>}(z_1,\kappa z_1+C)=0 \forall z_1.
\end{equation}
Under this ansatz, the planes written in the form $z_1=-\kappa z_2+C$, appear as $z_2=-1/\kappa z_1+C'$. Here, differently from ~\eqref{eq:perpendicular_planes}, we need to allow for arbitrary phase local phase rotation and as a consequence $\kappa \in \complex$. The condition for two planes to be orthogonals become $\kappa_1^* \kappa_2=-1$. Notice that this implies that all the coefficients $f_{i}$ of $z_1$ in the new polynomial, 
\begin{equation}
    \begin{split}
    p^{'}(z_1) &=p_{|C_{\psi}\rangle}(z_1, \kappa z_1+C)\\
    &=f_0(C)+f_1(\kappa,C) z_1+...+f_{r}z_1^{r},
    \end{split}
\end{equation}
have to be zero. Each of these coefficients is a polynomial of order $N_max=r^{*}$, the stellar rank of the state This notably simplifies the task, as all possible solutions $\{\kappa_{l},C_{l}\}$ are already encountered after checking the zeroth and first order coefficients $f_0(C)$ and $f_1(\kappa,C)$. The procedure then works as follows 
\begin{enumerate}
    \item Solve $f_0(C)=0$. This is a polynomial on $C$ of order $r^{*}$, so we should get $r^{*}$ possible values of $C$ (counting multiplicities), 
    \item For each value of $C$ obtained in step $1$, solve $f_1(\kappa,C_{k})=0$. This function is linear in $\kappa$, so it provides a single solution $\kappa_{k}$ for each value  $C_{k}$. It may happen that for some values of $C_{\kappa}$ this equation becomes trivial. In that case we need to keep searching for possible solutions at higher order coefficients. If there are more than two different values of $\kappa$ we can already be sure that the state is not passive separable. 
    \item Finish the computation of pairs $(\kappa_k,C_k)$ for the values of $C_k$ that rendered $f_1$ trivial. If there are only two different values of $\kappa$, check that all pairs $(\kappa_k,C_{k})$ satisfy $f_{j}(\kappa_k,C_k)=0$ for $j\geq 2$.  
\end{enumerate}
Notice that the phase of the slopes $\kappa_{I(J)}$ should not be overlooked and play a relevant role, if two slopes have the same absolute value but different phase the state is already non separable under passive operations. \par
There is a subtlety still to be considered here. If we happen to be checking in the basis in which the state is actually separable we know that the solutions are actually $z_{2}=C$ or $z_1=C$. The former case is correctly captured by the ansatz, but results in $\kappa=0$. The latter case on the other hand has to be checked by just replacing the ansatz by its inverse: $z_{1}=\kappa^{'} z_{2}+C$ and will result in $\kappa^{'}=0$. \par 
\section{Specific examples of two-mode states}\label{sec:appendix_B}
\subsubsection{Hong-Ou-Mandel state}
A classical example of a passively separable core state is a Hong-Ou-Mandel like state
\begin{equation}
    |C_{HOM}\>=\frac{|02\>-|20\>}{\sqrt{2}}.
\end{equation}
This state can clearly be brought to a factorized form by using a balanced beam splitter. Moreover, it is a subclass of the NOON states, already considered in the main text. We present it here as a simple illustration of the procedure described in Appendix \ref{sec:Appendix_A}. \par 
The stellar function of the state is given by 
\begin{equation}
    P_{\psi_{HOM}}(z_1,z_2)=\frac{z_{1}^{2}-z_{2}^{2}}{2}.
\end{equation}
Though getting directly the solutions is easy, we use this example to illustrate the procedure to test the ansatz \eqref{eq:ansatz}. We assume $z_2=\kappa z_1$. Notice that because the polynomial is homogeneus, \ie it has no independent term, the inclusion of a constant $C$ will provide a trivial solution $C=0$ of multiplicity $2$. In the end then only the contributions at second order count, providing:  
\begin{equation}
    P_{\phi_{HOM}}(z_1,\kappa z_1)=z_{1}^{2} \frac{\kappa^2-1}{2}.
\end{equation}
We can observe that the latter is equal to zero for every value of $z_1$ if and only if 
\begin{equation}
    \kappa=\pm 1.
\end{equation}
This implies that the zero surfaces of the polynomial are given by the planes 
\begin{equation}
    z_{2}=\pm z_{1}.
\end{equation}
We can now check that indeed $\kappa_{+}\kappa_{-}=-1$, which implies that the two planes are orthogonal as expected.  \par
It is important to remark that it is not difficult to define the full family of separable states of stellar rank 2. There are only two kinds of core states of stellar rank 2 that are separable: 
\begin{enumerate}
    \item Hong-Ou-Mandel-like states: $\hat{U}_{O}|11\>$,
    \item One sided states:$\hat{U}_{O}(a |0\>+b|1\>+c|2\>)\otimes |0\>$,
\end{enumerate}
where $\hat{U}$ is an arbitrary passive unitary.

\subsubsection{Maximally entangled core states}
 We consider now maximally entangled core states of stellar rank $2N$, given by
 \begin{equation}
    |C_{max-ent}\>=\frac{|00\>+|11\>+...+|NN\>}{\sqrt{N+1}}.
\end{equation}
The stellar function is given by 
\begin{equation}
    P_{max-ent}(z)=\frac{1+z_1 z_2+...+\frac{1}{N}z_{1}^{N}z_{2}^{N}}{\sqrt{N}}.
\end{equation}
If we check the ansatz $z_{2}=\kappa z_1$ we get 
\begin{equation}
    P_{max-ent}(z_1,\kappa z_1)=\frac{1+z_1^2 \kappa+...+\frac{1}{N!}z_{1}^{2N}\kappa^N}{\sqrt{N}}.
\end{equation}
There is no value of $\kappa$ for which this polynomial is equal to zero for every value of $z_1$. This implies that, with certainty, the state is not separable under passive operations. \par
This can be further checked explicitly constructing the solutions. If we do a change of variable $w=z_1 z_2$, this is just a polynomial $P(w)$, whose solutions are given by 
\begin{equation}
    z_{1}z_{2}=c_{i}, \forall i\in \left[1,N\right], \text{ where } c_i\in\complex^2.
\end{equation}
meaning that there are $N$ hyperbolas of zeros (\ie $2N$ disconnected surfaces). 
\subsubsection{Analysis of two photon subtracted states}
The most general two-photon subtracted state of two modes is given by 
\begin{equation}
    |\psi\>=\hat{A}_2\hat{A}_1\hat{G}|00\>,
\end{equation}
where 
\begin{equation}
    \begin{split}
        & \hat{A}_1=\hat{a}_1\cos(\theta_1)+e^{i\phi_1}\sin{\theta_1}\hat{a}_2\\
        & \hat{A}_2=\hat{a}_1\cos(\theta_2)+e^{i\phi_2}\sin{\theta_2}\hat{a}_2\\,
    \end{split}
\end{equation}
and $\hat{G}$ is a general Gaussian unitary, which can be rewritten, using Bloch-Messiah decomposition, as $\hat{G}=\hat{U}(\hat{S}_1\otimes\hat{S}_2)\hat{V}$, so that its action on $|00\>$ is just given by $\hat{U}(\hat{S}_1\otimes\hat{S}_2)|00\>$. If, additionally, we commute the passive unitary $\hat{U}$ through the superpositions of annihilation operators, we just get different superpositions. This implies that the most general states we need to analyze are given by 
\begin{equation}
    |\psi_{2-subtr}\>=\hat{A}_2\hat{A_1}\hat{S}_1\hat{S}_2|00\>,
\end{equation}
where $\hat{S}_1$ and $\hat{S}_2$ implement real squeezings $r_1$ and $r_2$ respectively. The core state of $|\psi_{2-subtr}\>$ is obtained by commuting the squeezing operations through the chain of annihilation operators. This in turn produces a superposition of annihilation and creation operators. Altogether, we get 
\begin{equation}
    \begin{split}
    & |C_{2-subtr}\>= \\
    &(\cosh(r_1)\sinh(r_1)\cos(\theta_1)\cos(\theta_2)\\
    & +e^{i(\phi_1+\phi_2)}\sin(\theta_1)\cosh(r_2)\sin(\theta_2)\sinh(r_2))|00\>\\
    & +\sinh(r_1)\sinh(r_2)(e^{i\phi_1}\sin(\theta_1)\cos(\theta_2)\\
    & +e^{i\phi_2}\sin(\theta_2)\cos(\theta_1))|11\>\\
    & +\frac{\sinh(r_1)^2\cos(\theta_1)\cos(\theta_2)}{\sqrt{2}}|20\>\\
    & +\frac{\sinh(r_2)^2e^{i(\phi_1+\phi_2)}\sin(\theta_1)\sin(\theta_2)}{\sqrt{2}}|02\>.
    \end{split}
\end{equation}
From this, it is easy to get the corresponding polynomial representation. Nevertheless, it is in general not obvious to extract meaningful conditions other than in a one to one basis, for arbitrary values of the parameters. For that reason we will restrict to the case $\phi_1=\phi_2=0$.
Then we will proceed with checking the ansatz \eqref{eq:ansatz} for different values of $\theta_1,\theta_2$.\par

The polynomial, for $\varphi_1=\varphi_2=0$, becomes
\begin{equation}
    \begin{split}
    & P_{C_{2-subtr}}(z_1,z_2)= \cosh(r_1)\sinh(r_1)\cos(\theta_1)\cos(\theta_2)\\ &+\sin(\theta_1)\cosh(r_2)\sin(\theta_2)\sinh(r_2)\\
    & +\sinh(r_1)\sinh(r_2)\left(\sin(\theta_1)\cos(\theta_2)+\sin(\theta_2)\cos(\theta_1)\right)z_1 z_2\\
    & +\sinh(r_1)^2\cos(\theta_1)\cos(\theta_2)z_1^2\\
    & +\sinh(r_2)^2\sin(\theta_1)\sin(\theta_2)z_2^2.
    \end{split}
\end{equation}
When we check ansatz \eqref{eq:ansatz}, we can observe that the following configurations of the parameters lead to passive separable states 
\begin{enumerate}
    \item $\theta_1=\theta_2+ n \pi$, for $n\in \mathbb Z$, and for any squeezing values. This implies that the two modes are subtracted on the same mode, as $\hat A_1=\pm \hat A_2$. 
    \item $r_1=r_2$, or $r_1=0$ and $\theta_1=\pm \pi/4 + n\pi$, $\theta_2=\pm \pi/4 +m \pi$, \ie the two photons are subtracted from equally squeezed input states, in balanced superpositions, which can be orthogonal to each other. 
\end{enumerate}

\section{Atomic decomposition of a passive separable state scrambled over several modes}\label{sec:app_C}
To further explore the power of the algorithm described in section \ref{sec:structural_charact_stellar_poly}, we analyze in this section states that have been scrambled in an arbitrary way over several modes. We start from example 1:
\begin{equation*}
    |\psi\rangle=\frac{1}{\mathcal N}\hat a_1^\dagger\left(\frac{\hat a_1^\dagger+\hat a_2^\dagger}{\sqrt{2}}\right)\hat a_4^\dagger\left(\frac{\hat a_3^\dagger-\hat a_4^\dagger}{\sqrt{2}}\right)|0\rangle,
\end{equation*}
whose stellar polynomial is given in equation \eqref{eq:poly_state1}. We apply a rotation given by a beam splitter mixing the first two modes, followed by a similar beam splitter interfering modes 2 and 3. Both beam splitters are set with parameter $\theta=3/10$. The new state is 
\begin{equation}
    |\psi_1\rangle=\hat O_{23}(\theta)\hat O_{12}(\theta)|\psi\rangle,
\end{equation}

whose stellar polynomial is 

\begin{equation}\label{eq:poly_scrambled_1}
\begin{split}
    p_{|\psi_1\rangle}(\vec z)=& 0.088 z_1^4-0.033 z_1^3 z_2+0.54 z_1^3 z_3-0.311 z_1^3 z_4\\
    & -0.0202 z_1^2 z_2^2-0.015 z_1^2 z_2 z_3+0.0192 z_1^2 z_2 z_4\\
    &+0.68 z_1^2 z_3^2-0.88 z_1^2 z_3 z_4+ 0.0115 z_1 z_2^3\\ 
    & -0.147 z_1 z_2^2 z_3+0.077 z_1 z_2^2 z_4+0.29 z_1 z_2 z_3^2\\
    &-0.28 z_1 z_2 z_3 z_4-0.52 z_1 z_3^3+0.57 z_1 z_3^2 z_4\\
    &-0.00145 z_2^4 +0.031 z_2^3 z_3-0.0166 z_2^3 z_4\\
    &-0.153 z_2^2 z_3^2+0.171 z_2^2 z_3 z_4-0.120 z_2 z_3^3\\
    &+0.118 z_2 z_3^2 z_4+0.080 z_3^4-0.083 z_3^3 z_4.
    \end{split}
\end{equation}
The underlying structure of this polynomial is much less apparent than the one of the state before applying the mode basis transformation. It is important to explicit that the polynomial considered for the implementation of the algorithm, using symbolic computation tools has to be the exact one, which keeps, for example $\cos(3/10)$ unevaluated. The reason for this is that a small perturbation on the coefficients of a polynomial immediately renders it irreducible. There has been a whole body of work on the factorization of polynomials with inexact coefficients \cite{wu_numerical_2017}, which instead of exactly factorizing the polynomial, try to find the closest factorizable polynomial and factorize it. A numerical implementation based on this factorization algorithm, available on the Matlab package NAClab \cite{Naclab}, was also performed, yielding similar results. 
\par
When we input the polynomial \eqref{eq:poly_scrambled_1}, to the atomic decomposition algorithm, it correctly identifies the four irreducible components, which are then grouped into a graph consisting of two independent connected components. 
\begin{equation*}
    \begin{aligned}
         p_1(\vec z) =&
        \frac{1}{2} \left(z_1+z_1 \cos \left(\frac{3}{5}\right)-z_2 \sin \left(\frac{3}{5}\right)-2 z_3 \sin \left(\frac{3}{10}\right)\right)\\
         p_2(\vec z) =& \frac{1}{2} \left( z_1 \left(1+2 \sin \left(\frac{3}{10}\right)+\cos \left(\frac{3}{5}\right) \right)-z_2 \sin \left(\frac{3}{5}\right) \right. \\
         & \left.  + 2 z_2 \cos \left( \frac{3}{10} \right) - 2 z_3 \sin \left( \frac{3}{10} \right) \right)\\
         p_3(\vec z) =& \frac{1}{2} \left(z_1 \sin \left(\frac{3}{5}\right)+z_2 \left(\cos \left(\frac{3}{5}\right)-1\right)+2 z_3 \cos \left(\frac{3}{10}\right)\right)\\
         p_4(\vec z) =& \frac{1}{2} \left(z_1 \sin \left(\frac{3}{5}\right)+z_2 \left(\cos \left(\frac{3}{5}\right)-1\right) \right. \\
         & \left. +2 z_3 \cos \left(\frac{3}{10}\right)-2 z_4\right)
    \end{aligned}
\end{equation*}

\begin{equation*}
\begin{aligned}
p_1(\vec z)=&
\frac{1}{2} \left(z_1+z_1 \cos \left(\frac{3}{5}\right)-z_2 \sin \left(\frac{3}{5}\right)\right. 
\\ &\left.\qquad -2 z_3 \sin \left(\frac{3}{10}\right)\right)\\
 p_2(\vec z)=&\frac{1}{2} \left(z_1 \left(1+2 \sin \left(\frac{3}{10}\right)+\cos \left(\frac{3}{5}\right)\right)-z_2 \sin \left(\frac{3}{5}\right)\right.\\
 &\left.\qquad +2 z_2 \cos \left(\frac{3}{10}\right)-2 z_3 \sin \left(\frac{3}{10}\right)\right)\\
 p_3(\vec z)=&\frac{1}{2} \left(z_1 \sin \left(\frac{3}{5}\right)+z_2 \left(\cos \left(\frac{3}{5}\right)-1\right)\right.\\&\left. \qquad +2 z_3 \cos \left(\frac{3}{10}\right)\right)\\
p_4(\vec z) &= \tfrac12\left( z_1 \,\sin\frac35 \;+\; z_2\Bigl(\cos\frac35 - 1\Bigr)\right. \\
&\qquad\left.\;+\;2\,z_3\,\cos\frac{3}{10} - 2\,z_4\right)
\end{aligned}
\end{equation*}

which are grouped into a structural graph $\mathcal G_{|\psi_o\rangle}=\{\{p_1,p_2\},\{p_3,p_4\}\}$. The transformation that brings the state to its atomic structure was obtained to be 
\begin{equation*}
    O=\left(
\begin{array}{cccc}
 0.912668 & 0.29552 & 0.282321 & 0 \\
 -0.282321 & 0.955336 & -0.0873322 & 0 \\
 -0.29552 & 0 & 0.955336 & 0 \\
 0 & 0 & 0 & 1, \\
\end{array}
\right)
\end{equation*}
which corresponds, as expected, to the inverse of the transformation applied on the input state. The structural graph, with the irreducible factors expressed in the \textit{atomic basis} is given by 
\begin{figure}[htbp]
\centering
\includegraphics[width =0.5\linewidth]{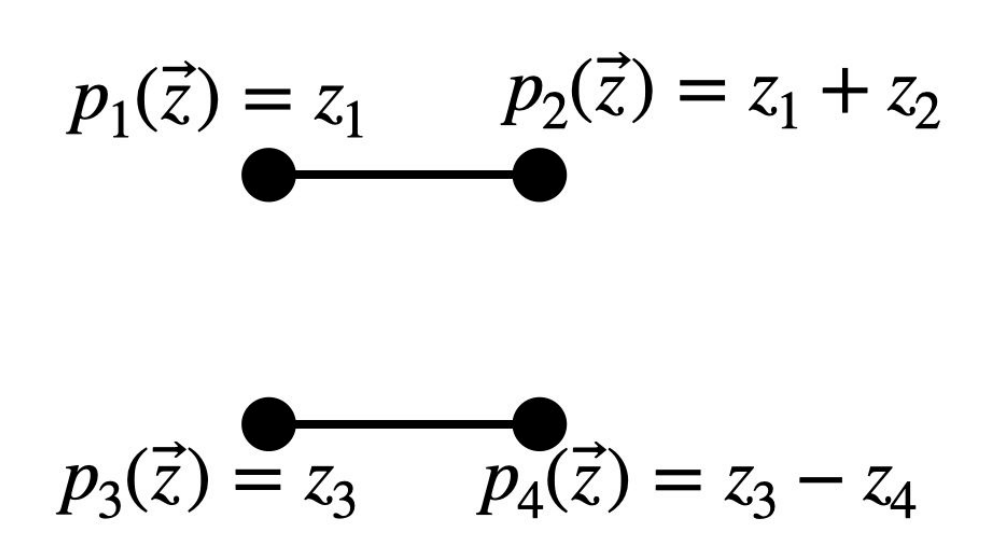}
   
\end{figure}

which, as expected, correspond to the independent excitations from which the state was generated. If any other passive transformation had been applied, the same structural decomposition would have been found. This highlights the utility of this method to determine whether two states can be related by a passive linear optics transformation, which in itself corresponds an outstanding problem \cite{Migdal_2014,parellada_2023}. 
\end{document}